\documentclass[11pt,a4paper]{article}

\usepackage[UKenglish]{babel}
\usepackage[T1]{fontenc}
\usepackage[utf8]{inputenc}
\usepackage{lmodern}
\usepackage[margin=1in]{geometry}
\usepackage{microtype}
\usepackage{graphicx}
\usepackage{amsmath,amssymb,amsfonts,mathtools,amsthm}
\usepackage{stmaryrd}
\usepackage{array,booktabs,multirow,tabularx}
\usepackage{xcolor}
\usepackage{enumitem}
\usepackage{todonotes}
\usepackage{tikz}
\usetikzlibrary{arrows.meta,positioning,calc,shapes.geometric,
  decorations.pathreplacing,decorations.pathmorphing,fit,backgrounds,
  matrix,chains,quotes}
\usepackage{framed}
\usepackage{etoolbox}
\usepackage{hyperref}
\usepackage[capitalise,noabbrev]{cleveref}

\hypersetup{
  unicode=true,
  colorlinks=true,
  linkcolor=blue,
  citecolor=blue,
  urlcolor=blue,
  pdfauthor={Manfred Droste and Guo-Qiang Zhang},
  pdftitle={Discrete Linear Ensemble Logic: Decidability, Expressiveness, and Axiomatization}
}

\setlist{nosep,leftmargin=*}
\theoremstyle{plain}
\newtheorem{theorem}{Theorem}
\newtheorem{lemma}[theorem]{Lemma}
\newtheorem{proposition}[theorem]{Proposition}
\newtheorem{corollary}[theorem]{Corollary}

\theoremstyle{definition}
\newtheorem{definition}[theorem]{Definition}
\newtheorem{example}[theorem]{Example}

\makeatletter
\let\@articlekeywords\@empty
\let\@articleclassifications\@empty
\newcommand{\keywords}[1]{\gdef\@articlekeywords{#1}}
\newcommand{\ccsdesc}[2][]{%
  \ifx\@articleclassifications\@empty
    \gdef\@articleclassifications{#2}%
  \else
    \g@addto@macro\@articleclassifications{; #2}%
  \fi}
\newcommand{\printarticlemetadata}{%
  \begin{center}
  \begin{minipage}{0.94\textwidth}
  \small
  \ifx\@articlekeywords\@empty\else
    \noindent\textbf{Keywords:} \@articlekeywords\par
  \fi
  \ifx\@articleclassifications\@empty\else
    \medskip
    \noindent\textbf{Subject classifications:} \@articleclassifications\par
  \fi
  \end{minipage}
  \end{center}
  \medskip}
\AfterEndEnvironment{abstract}{\printarticlemetadata}
\makeatother

\definecolor{routeblue}{RGB}{34,74,148}
\definecolor{routegreen}{RGB}{35,120,70}
\definecolor{routered}{RGB}{160,45,45}
\definecolor{routepurple}{RGB}{102,65,145}
\definecolor{routeorange}{RGB}{196,112,35}
\definecolor{softgrey}{RGB}{246,247,249}
\definecolor{ledgergold}{RGB}{132,96,16}
\definecolor{deepink}{RGB}{32,39,55}
\definecolor{alphabetteal}{RGB}{0,112,126}

\definecolor{revblue}{rgb}{0.0,0.25,0.7}
\definecolor{addgreen}{rgb}{0.0,0.45,0.15}
\definecolor{auditgold}{rgb}{1.0,0.97,0.85}

\definecolor{shadecolor}{rgb}{1.0,0.97,0.85}

\newcommand{\mdiamond}{\mathord{\scalebox{1.05}{$\Diamond$}}\mkern-2mu}
\newcommand{\boldBox}{\mathord{\scalebox{1.05}{$\Box$}}\mkern-0.5mu}
\newcommand{\EL}{\textsf{EL}}
\newcommand{\TPTL}{\textsf{TPTL}}
\newcommand{\MTL}{\textsf{MTL}}

\newcommand{\HS}{\textsf{HS}}
\newcommand{\LTL}{\textsf{LTL}}
\newcommand{\FOLTL}{\textsf{FO\hbox{-}LTL}}
\newcommand{\FO}{\mathrm{FO}}
\newcommand{\MSO}{\mathrm{MSO}}

\newcommand{\Nat}{\mathbb{N}}
\newcommand{\Natp}{\mathbb{N}_{>0}}
\newcommand{\HEL}{\mathcal{H}_{\EL}}
\newcommand{\Th}{\mathrm{Th}}

\newcommand{\MONFO}{\mathrm{MONFO}_{\Nat,+}}

\newcommand{\event}[1]{\mathsf{#1}}

\title{Discrete Linear Ensemble Logic:\\
       Decidability, Expressiveness, and Axiomatization}
\author{Manfred Droste$^{1}$ \qquad Guo-Qiang Zhang$^{2}$\\[0.75em]
\small $^{1}$Institute of Computer Science, Leipzig University, Leipzig, Germany\\
\small $^{2}$The University of Texas Health Science Center at Houston, Houston, Texas, USA\\
\small Corresponding to \texttt{guo-qiang.zhang@uth.tmc.edu}}
\date{}
\ccsdesc[500]{Theory of computation~Modal and temporal logics}
\ccsdesc[500]{Theory of computation~Logic and verification}
\ccsdesc[300]{Theory of computation~Algebraic language theory}
\ccsdesc[300]{Applied computing~Health informatics}
\keywords{Temporal logic, monadic Presburger arithmetic, descriptive complexity, electronic health records}

\begin{document}
\maketitle

\begin{abstract}
We study the discrete point-based fragment of Ensemble Logic $\EL(\Nat)$ over the natural numbers, a logic combining displacement $\varphi_u$, bounded metric modalities $\boldBox_t$ and $\mdiamond_t$ with additive bounds, Boolean connectives, and first-order quantification over $\Nat$.  Motivated by the need for a unified symbolic layer for biomedical knowledge with temporal, spatial, genomic, and multimodal metric content, we develop the foundational discrete theory of the formalism.  We give syntax and semantics, and prove a forward embedding of $\EL(\Nat)$ over a finite proposition set $\mathcal{P}$ into first-order monadic Presburger arithmetic $\FO(\Nat,<,+;\mathcal{P})$.  This embedding yields the analytical upper bounds, while a reduction from nondeterministic two-counter machines with recurring control states proves that satisfiability is $\Sigma^1_1$-complete and validity is dually $\Pi^1_1$-complete.  Expressively, $\EL(\Nat)$ strictly extends the star-free $\omega$-languages and is incomparable with the $\omega$-regular languages: it defines the non-$\omega$-regular counting language $\{a^mb^mc^md^m\mid m\geq 1\}\cdot\Sigma^\omega$, whereas a delimited parity language remains outside the logic by classical Presburger-arithmetic lower bounds.  On the proof-theoretic side, we present a sound Hilbert system $\HEL$ and establish completeness relative to monadic Presburger validity as oracle, noting that completeness relative to plain Presburger arithmetic is impossible.  We also prove that the existential fragment $\exists\EL(\Nat)$ has $\mathrm{NP}$-complete satisfiability and $\mathrm{coNP}$-complete unsatisfiability. Finally, we show that finite active-domain model checking has $\mathrm{PTIME}$ data complexity and $\mathrm{PSPACE}$-complete combined complexity.  Together, these results give a precise decidability, expressiveness, proof-theoretic, and model-checking baseline for subsequent algorithmic applications of Ensemble Logic in biomedicine.
\end{abstract}

\section{Introduction}\label{sec:intro}

Modern biomedicine has a formalism gap.  Taxonomical ontologies such as SNOMED CT and
UMLS~\cite{bodenreider2004umls} capture terminological structure but not metric time; pathway languages such as
SBML~\cite{hucka2003sbml} and BioPAX~\cite{demir2010biopax}
capture biochemical network structure but not spatial containment;
standard temporal logics~\cite{ltl} 
capture traces but not anatomical geometry; mathematical
morphology captures shapes but not protocol-level quantifier scope
\cite{serra1982morphology}.
A clinical constraint such as an HbA1c normalization condition that must hold
continuously for $24$ weeks, or a lymph-node tumour deposit large enough to
contain a $1$\,mm spherical core, has a clear mathematical truth condition.  
Yet neither condition maps naturally to a single formal representation without
external scripting in a programming setting.

The central observation behind Ensemble Logic ($\EL$)~\cite{zhang2024temporal} is that
biomedical reasoning repeatedly uses three non-Boolean operations: exact displacement,
bounded existence, and bounded universality.  In the discrete setting these are written
$\varphi_u$, $\mdiamond_t\varphi$, and $\boldBox_t\varphi$ ({\bf Fig.~\ref{fig:orientation}}).  Together with Boolean
connectives and first-order quantifiers over an additive index domain, they yield the
syntax
\begin{equation}\label{eq:syntax-intro}
\varphi,\psi\;::=\;p\mid\varphi_u\mid\neg\varphi\mid\varphi\wedge\psi\mid\varphi\vee\psi
\mid\mdiamond_t\varphi\mid\boldBox_t\varphi\mid\exists x\,\varphi\mid\forall x\,\varphi.
\end{equation}

The interpretation domain of $\EL$ accommodates discrete time, dense continuous time, Euclidean space, genomic coordinates, 
or mixed product spaces depending on the underlying monoid interpretation of the additive terms $u$ and $t$~\cite{zhang2024temporal}, which in general come from different monoids.

\begin{figure}[ht]
\centering
\begin{tikzpicture}[>={Latex[length=2.4mm,width=1.9mm]},font=\small,
  pt/.style={circle,fill=deepink,inner sep=1.5pt},
  rpt/.style={circle,fill=routered,inner sep=1.5pt}]
\draw[->,line width=1pt] (-0.3,0)--(11.5,0);
\foreach \x in {0,...,11}{\draw (\x,0.08)--(\x,-0.08);
  \node[below=0.5pt] at (\x,-0.08){\scriptsize \x};}
\node[pt] (i) at (2,0){};
\node[routered, below=13pt] at (5.6,0){$u$};
\node[deepink,above left=0pt and -3pt] at (i){current $i$\;};
\draw[line width=1.3pt,routeblue] (2,0.5)--(6,0.5);
\draw[line width=1.3pt,routeblue] (2,0.5)--(2,0.14);
\draw[line width=1.3pt,routeblue] (6,0.5)--(6,0.28);
\draw[routeblue,fill=white,line width=1pt] (6,0.16) circle (2.3pt);
\node[routeblue,above] at (4,0.55){window $[\,i,\ i{+}t\,)$};
\draw[->,line width=1.2pt,routered,bend right=20] (2,-0.14) to
  node[below=2pt]{$\varphi_u$: $\varphi$ holds at the point exactly $u$ units into the future} (9,-0.14);
\node[rpt] at (9,0){};
\node[routeblue, anchor=north,align=center] at (5.5,-1.55)
  {$\mdiamond_t\varphi$: $\varphi$ holds at \emph{some} point within the window\qquad
   $\boldBox_t\varphi$: $\varphi$ holds at \emph{every} point in the window};
\end{tikzpicture}
\caption[The metric vocabulary: windows and the jump $\varphi_u$]{An illustration of how the three operators work along the integer line.}
\label{fig:orientation}
\end{figure}

\paragraph*{Contributions}

In this paper, we focus the mathematical investigation on the foundational discrete case $\EL(\Nat)$, originally introduced in~\cite{zhang2024temporal}, where $\Nat$ is interpreted temporally (modeling disease trajectories) or spatially (such as base coordinates in genomics).
After brief motivational overview and comparison with existing temporal logic formalisms, we
 define the syntax and semantics of $\EL(\Nat)$ (\S\ref{sec:syntax}). 
 We show that satisfiability of $\EL(\Nat)$ formulas is undecidable, and in fact $\Sigma^1_1$-hard, by a reduction from the recurring problem for nondeterministic two-counter machines.
  Building on the foundation laid out in~\cite{zhang2024temporal},  this paper establishes results in the directions of decidability, expressiveness, and 
  axiomatization, with specific topics that lie at the  heart of any new logical formalism:
  
 \begin{enumerate}
 \item We translate $\EL(\Nat)$ into monadic Presburger arithmetic, obtaining the analytical upper bounds $\Sigma^1_1/\Pi^1_1$ (\S\ref{sec:forward}). 
  \item We further prove $\Sigma^1_1$-hardness by a two-counter-machine reduction tailored to $\EL$'s uniformly positive modal bounds (\S\ref{sec:undec-2cm}). 
   
 \item  We place $\EL(\Nat)$ in the descriptive hierarchy: it contains all $\FO(\Nat,<;\mathcal P)$ languages and defines non-$\omega$-regular languages (\S\ref{sec:hierarchy}). 
 \item  We give a sound Hilbert system and prove $\MONFO$-relative completeness (\S\ref{sec:hilbert}). 
    \item We prove NP-completeness of existential-fragment satisfiability, coNP-completeness of unsatisfiability, and give a complete proof system for satisfiability (\S\ref{sec:exists}).
  \item We prove finite active-domain model checking is in PTIME for fixed formulas and PSPACE-complete (\S\ref{sec:mc}). 
 
\end{enumerate}

\section{Background}\label{sec:back}

\subsection{Motivations from Biomedicine}\label{subsec:bio-motivation}

Modern biomedicine depends on knowledge expressed across free-text discovery claims,
longitudinal records, trial protocols, physiological signals, imaging, and omics.  These modalities are still analyzed with fragmented formalisms and ad hoc scripts.  The same
patient, lesion, molecular locus, or protocol event is therefore represented differently
across literature mining, EHR phenotyping, trial eligibility, image analysis, and
genomics, even when the underlying scientific claim has the same structure.

In prior work~\cite{time25,scaleup}, we demonstrated that the tri-modality of
$\varphi_u$, $\mdiamond_t\varphi$, and $\boldBox_t\varphi$ (together with standard Boolean and first-order quantifiers) suffices to articulate the entire temporal structure of clinical-trial protocols in a single closed formula, such as one for Alzheimer's-disease trial NCT01266525~\cite{nct01266525}. 
We illustrate the key modeling idea behind these works using a deliberately compact monitoring protocol example.

\begin{quote}
Whenever therapy starts on day $x$, a baseline measurement must occur during the
four-day window beginning at $x$; a dose (of medication) must be taken on every day of the
fourteen-day window beginning one day later; and an outcome review must occur
exactly twenty-eight days after the start.
\end{quote}

With propositions $\event{Start}$, $\event{Baseline}$, $\event{Dose}$, and
$\event{Review}$, the protocol is expressed as
\begin{equation}\label{eq:guiding-rule}
\forall x\Bigl(
  \event{Start}_x \to
  \bigl((\mdiamond_4\event{Baseline})_x
  \wedge (\boldBox_{14}\event{Dose})_{x+1}
  \wedge \event{Review}_{x+28}\bigr)
\Bigr).
\end{equation}
The half-open-window convention makes the three clauses precise:
\begin{center}
\small
\begin{tabularx}{0.94\textwidth}{lll}
\toprule
\textbf{Fragment} & \textbf{Plain reading} & \textbf{Truth condition} \\
\midrule
$\event{Review}_{x+28}$
& exact displacement
& $\event{Review}$ takes place on day $x+28$; \\
$(\mdiamond_4\event{Baseline})_x$
& bounded existence
& $\event{Baseline}$ holds on day $x$, $x+1,x+2,$ or day $x+3$; \\
$(\boldBox_{14}\event{Dose})_{x+1}$
& bounded universality
& $\event{Dose}$ is taken on every day in $[x+1,x+15)$. \\
\bottomrule
\end{tabularx}
\end{center}
The variable $x$ is not a patient variable or an event object.  It is a numerical
offset from the origin at which the closed formula is evaluated.  Thus
$\forall x$ checks every possible start position in the record.

Quantification over metric parameters and logical inference become useful when the same unknown
length must be reused.  For example, the schematic rule
\begin{equation}\label{eq:reused-bound}
\forall x\Bigl(\event{Start}_x\to
 \exists n\bigl[
 \boldBox_{n+1}\event{Stable} \wedge
 (\boldBox_{n+1}\event{Treatment})_{n+1}\wedge
 (\mdiamond_{n+1}\event{Outcome})_{2(n+1)}
 \bigr]_x\Bigr)
\end{equation}
asks for a positive, record-dependent phase length $n+1$ and reuses exactly that
length for a stabilization phase, a treatment phase, and an outcome window.  This
reuse of a semantically chosen additive parameter is the most distinctive syntactic  feature of $\EL$.

A complete clinical-trial protocol is naturally a conjunction of many clauses
of the form illustrated by~\eqref{eq:guiding-rule}, together with domain-specific
eligibility predicates and data-quality assumptions.  Prior work shows how such
clauses can be composed for trial protocols \cite{time25}; the present paper is
concerned with the underlying discrete logic, not with validating any particular
medical protocol.

The same approach extends to other medical domains.
In sleep medicine~\cite{sleep},
the formula $(\varphi_{t})$ places a finding $\varphi$ at a specific offset from a reference time point;  
    for example, snore occurs 6 seconds later can be formalized as $\event{Snore}_{6}$.
 Similarly, $(\Box_t \varphi)$ states that the condition $\varphi$ holds continuously for duration $t$; 
    for example, airflow reduction greater than 30\% persisting for 10 seconds 
    can be formalized as $\Box_{10}\event{FlowDrop30}$. 
    $(\Diamond_t \varphi)$ states that  property $\varphi$ holds somewhere within a bounded window of length $t$, so that a spindle occurring within a 3-second window can be formalized as $\Diamond_3\event{Spindle}$.

The scientific content with respect to how long, how far, how soon, how continuously,
is exactly the metric structure lost by mainstream approaches.
The EL thesis~\cite{zhang2024temporal}
is that displacement, constrained existence, and bounded universality are
parts of a stable ``lingua franca'' for biomedical knowledge.  The novelty is not that any one
operator is new.  Linear temporal logic, metric temporal logic, and signal temporal logic already contain important special cases (see Section~\ref{back:logics}). 
The novelty is the cross-modal
stability of the operator basis: the same tri-modality can be interpreted on sequences,
claim windows, EHR days, dense waveforms, Euclidean image regions, stranded genomic
coordinates, and product spaces for spatial-temporal interactions.

\subsection{Comparison with neighboing formalisms}\label{back:logics}

The formalisms in Table~\ref{tab:paradigms} are selected according to three
principled axes: the semantic carrier (points, intervals, or signals), the way a
metric parameter is supplied (fixed constant, frozen timestamp, or quantified
offset), and the domain over which first-order variables range.  They are not
intended as an exhaustive survey, but as an illustration of how EL thesis~\cite{zhang2024temporal}
is genuinely distinct from existing formalisms.

Propositional linear temporal logic ($\LTL$) is the classical point-based logic of
linear executions: formulas are evaluated at positions of a trace, and temporal
operators such as next $\textsf{X}$ and until $\mathcal{U}$ describe how truth
evolves along that trace~\cite{ltl,kamp}. $\LTL$ is
qualitative: it can express that an event eventually occurs, but not that it
occurs within a specified numerical delay.

Timed and metric temporal logics add quantitative time in different ways. Metric
temporal logic ($\MTL$~\cite{koymans,ouaknine-worrell}) decorates temporal operators with fixed intervals, as in
$\varphi\,\mathcal{U}_{[a,b]}\,\psi$. Timed
propositional temporal logic ($\TPTL$~\cite{alur-henzinger}) instead introduces freeze variables: a
formula may store the current timestamp and later compare it with other
timestamps by arithmetic constraints~\cite{alur-henzinger}. 
 First-order
$\LTL$ combines temporal operators with quantification over an external data
domain, while Halpern-Shoham logic ($\HS$) evaluates formulas over intervals
rather than points, using Allen-style interval relations~\cite{allen,halpern1991propositional}.

The logic $\EL$ is closest in spirit to metric temporal logics, but its metric
parameters are not merely constants written into the syntax. Instead, $\EL$
quantifies over additive displacements of the timeline. Thus a formula can choose
an offset $n$ and then reason about positions such as $v+n$ or $v+n+1$. In
discrete time, a bounded metric-until formula has the $\EL$-style rendering
$\varphi\,\mathcal{U}_{[a,b]}\,\psi
  \;\equiv\;
  \bigvee_{r=a}^{b}
  \bigl(\psi_r \wedge \boldBox_{r}\varphi\bigr),$
up to the usual endpoint convention for until. Conversely, $\EL$ can quantify
over the bound itself, as in formulas of the form
$\exists n\,(\boldBox_{n+1}\varphi \wedge \cdots)$, which is outside ordinary
fixed-interval $\MTL$. This ability to quantify over an unbounded domain of
indices is also what conceptually brings $\EL$ closer to highly undecidable
interval logics such as Halpern-Shoham logic over discrete linear orders
~\cite{halpern1991propositional,bresolin2011}.
The table below summarizes key aspects of distinct logical formalisms.

\begin{table}[t]
\caption{Comparison along semantic carrier, metric mechanism, and quantified domain.}
\label{tab:paradigms}
\centering
\small
\begin{tabularx}{\textwidth}{@{}lXXX@{}}
\toprule
\textbf{Logic} & \textbf{Semantic carrier} & \textbf{Metric mechanism} &
\textbf{Quantified domain} \\
\midrule
$\EL(\Nat)$ & point of an $\omega$-word & exact shifts and bounded windows whose
additive parameters may contain variables & numerical offsets in $\Nat$ \\
$\LTL$ & point of an infinite trace & next and qualitative until; no numerical
parameter in the standard syntax & none in propositional $\LTL$ \\
$\MTL$ & timed position or time point & syntactically fixed metric intervals
$I$ on temporal operators & normally no quantification over interval bounds \\
$\TPTL$ & timed-word position with timestamp & freeze binder plus clock
constraints & frozen timestamps \\ 
$\FOLTL$ & trace position carrying a first-order structure & qualitative temporal
operators plus first-order atoms & external objects, not temporal offsets \\
\bottomrule
\end{tabularx}
\end{table}

\paragraph*{Attributes of $\EL$: Natural and Minimalist}

An important distinction between $\EL$ and the other formalisms mentioned above lies in its natural and minimalist character.
A property of an infinite word is a yes/no question one can ask about the sequence $\alpha(0),\alpha(1),\alpha(2), \dots$.  
Classical temporal logic $\LTL$ answers such questions by stepping through the word one position at a time, using operators such
as ``next'' and ``until;''  it can look arbitrarily far ahead, but it can only \emph{count} by chaining single steps.  

Ensemble Logic keeps the temporal intuition of a point moving along the word, but it hands
that point a ruler ({\bf Fig.~\ref{fig:orientation}}).  Instead of ``sometime later'' it can say ``somewhere within the next $t$ positions,'' and instead of ``the next
position'' it can say ``the position exactly $u$ steps further from here'' (also called ``displacement'')
where $t$ and $u$ are \emph{additive terms} built from numerical variables (integer variables in this paper).  
The single design decision that gives the logic its character is that these lengths are first-class objects: one variable
may fix a distance that is then reused, combined, and added elsewhere in the same formula.

This is also the reason that complexity results based on $\LTL$ do not directly apply or need to be revisited in the $\EL$ context: an algorithm linear in the size of $\LTL$ formulas can 
only be directly translated to one that it exponential to the size of $\EL$ formulas.

Natural and minimalist are also reflected in $\EL$'s treatment of biomedical data records directly as semantic models (with existing annotated features). 
In contrast, other formalisms usually require a more elaborate semantic structure to work with. 
However, a full exposition of how biomedical data records directly correspond to semantic models for $\EL$  is beyond the scope of the current paper.

\section{Syntax and Semantics}\label{sec:syntax}

We first recall basic definitions from~\cite{zhang2024temporal}.
Throughout, $\Nat=\{0,1,2,\ldots\}$ and $\Natp=\{1,2,\ldots\}$. We adopt a uniform
convention: all first-order variables range over $\Nat$, 
while modal lengths must evaluate to numbers $\geq 1$.

\begin{definition}[Terms]\label{def:terms}
We fix a countably infinite set of variables $V=\{x,y,z,\ldots\}$. 
\emph{Additive terms} are defined as
$u,v,t,s\;::=\;n\;\mid\;x\;\mid\;u+v,$
where $n\in\Nat$ is a natural-number numeral and $x\in V$. 
Numerals are encoded in binary. The syntactic size of a numeral $n$ is bounded by $|n|=\Theta(\log(n+1))$, and the size $|u|$ of a term is given by  the size of its syntax tree under this convention. Throughout the paper, $|\varphi|$ 
denotes the binary-encoded length of $\varphi$.

An assignment $\eta:V\to\Nat$ naturally extends to terms by homomorphic evaluation: $\eta(n)=n$ and $\eta(u+v)=\eta(u)+\eta(v)$. Whenever a term $t$ appears as the metric subscript of a modality, its evaluation must be strictly positive in the current assignment ($\eta(t) \geq 1$). Syntactically, this can be ensured by writing modality bounds as $t+1$.
\end{definition}

\begin{definition}[Formulas]\label{def:formulas}
For a finite set $\mathcal{P}$ of atomic propositions, the formulas of $\sf{EL}(\Nat)$ are generated by the grammar in~\eqref{eq:syntax-intro}, subject to the condition that every closed instance of a modal length evaluates to $\geq 1$. Standard abbreviations for $\top,\bot,\to,\leftrightarrow$ are derived as usual.
\end{definition}

An \emph{interpretation} is defined as $\alpha: \Nat\to 2^{\mathcal{P}}$, equivalently an \emph{$\omega$-word} over the alphabet $2^{\mathcal{P}}$. We denote the support of a proposition $p$ by $P^\alpha_p=\{i\in\Nat \mid p\in\alpha(i)\}$.

\begin{definition}[Satisfaction Relation]\label{def:sat}
For a given interpretation $\alpha$, variable assignment $\eta$, and temporal/spatial origin point $i\in\Nat$, satisfaction is defined inductively by standard Boolean clauses paired with the metric primitives:
\begin{align*}
(\alpha,\eta,i)&\models p
   &\Leftrightarrow&\quad p\in\alpha(i),\\
(\alpha,\eta,i)&\models\varphi_u
   &\Leftrightarrow&\quad (\alpha,\eta,i+\eta(u))\models\varphi,\\
(\alpha,\eta,i)&\models\mdiamond_t\varphi
   &\Leftrightarrow&\quad \exists j\in[i,i+\eta(t))\;(\alpha,\eta,j)\models\varphi,\\
(\alpha,\eta,i)&\models\boldBox_t\varphi
   &\Leftrightarrow&\quad \forall j\in[i,i+\eta(t))\;(\alpha,\eta,j)\models\varphi,\\
(\alpha,\eta,i)&\models Qx\,\varphi
   &\Leftrightarrow&\quad Q n\in\Nat\;(\alpha,\eta[x\mapsto n],i)\models\varphi
   \quad (Q\in\{\exists,\forall\}).
\end{align*}

A formula is closed if it has no free first-order variables. Variables occurring
inside additive subscripts are free unless they occur in the syntactic scope of an enclosing first-order quantifier; displacement and metric modalities do not bind variables.
For a closed formula $\varphi$, $\alpha\models\varphi$ is a syntactic shorthand for $(\alpha,\emptyset,0)\models\varphi$, and the corresponding language is defined as $L(\varphi)=\{\alpha\in(2^{\mathcal{P}})^\omega \mid \alpha\models\varphi\}$.
\end{definition}

\begin{proposition}[\cite{zhang2024temporal}]\label{prop:basic-vals}
For all formulas $\varphi,\psi$ and additive terms $u,v,s,t$ evaluating positively at modal lengths, the following semantic equivalences hold:
\begin{enumerate}[label=\textbf{(\roman*)}, leftmargin=2em]
    \item $(\neg\varphi)_u\leftrightarrow\neg(\varphi_u)$, $(\varphi\wedge\psi)_u\leftrightarrow\varphi_u\wedge\psi_u$, and dually for $\vee$;
    \item $(\varphi_u)_v\leftrightarrow\varphi_{u+v}$;
    \item $\boldBox_t$ and $\mdiamond_t$ distribute over $\wedge$ and $\vee$ respectively;
    \item $\neg\boldBox_t\varphi\leftrightarrow\mdiamond_t\neg\varphi$ and dually for $\neg\mdiamond_t$;
    \item Monotonicity: if $\eta(s)\leq\eta(t)$, then $\boldBox_t\varphi\to\boldBox_s\varphi$ and $\mdiamond_s\varphi\to\mdiamond_t\varphi$.
\end{enumerate}
\end{proposition}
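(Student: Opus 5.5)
The plan is to unfold the satisfaction clauses of Definition~\ref{def:sat} pointwise. Each equivalence is read as: for every interpretation $\alpha$, assignment $\eta$ and origin $i\in\Nat$, the left side holds at $(\alpha,\eta,i)$ iff the right side does. Since no item changes the assignment, $\eta$ stays fixed throughout. Every term then denotes a fixed natural number: I write $a=\eta(u)$, $b=\eta(v)$, and $m=\eta(t)\ge 1$, $k=\eta(s)\ge 1$ for the modal lengths. The proof becomes a short calculation on integer windows, with no induction on formulas.

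For \textbf{(i)}, $(\alpha,\eta,i)\models(\neg\varphi)_u$ iff $(\alpha,\eta,i+a)\models\neg\varphi$ iff not $(\alpha,\eta,i+a)\models\varphi$ iff $(\alpha,\eta,i)\models\neg(\varphi_u)$. The same one-line unfolding handles $\wedge$ and $\vee$, because the Boolean clauses are evaluated at the single point $i+a$. For \textbf{(ii)}, unfolding twice gives $(\alpha,\eta,(i+b)+a)\models\varphi$. Homomorphic evaluation gives $\eta(u+v)=a+b$, so associativity and commutativity of $+$ on $\Nat$ give $(\alpha,\eta,i+\eta(u+v))\models\varphi$.

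For \textbf{(iii)}, $\boldBox_t(\varphi\wedge\psi)$ at $i$ says every $j\in[i,i+m)$ satisfies both conjuncts, and this is equivalent to the conjunction of the two universal statements; this is $\forall$ distributing over $\wedge$. Dually, $\exists$ distributes over $\vee$ for $\mdiamond_t$. For \textbf{(iv)}, the negation of ``for all $j\in[i,i+m)$, $\varphi$ holds'' is ``some $j\in[i,i+m)$ fails $\varphi$'', i.e.\ $\mdiamond_t\neg\varphi$. The dual for $\neg\mdiamond_t$ is symmetric.

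For \textbf{(v)}, I use $k\le m$ to get $[i,i+k)\subseteq[i,i+m)$. A universal statement over the larger window then restricts to the smaller one, and a witness in the smaller window is also a witness in the larger one. The only point needing care, rather than a real obstacle, is the scope of the hypotheses. The positivity condition $\eta(t)\ge1$ is not needed for the equivalences themselves, but it guarantees that all modal subformulas are well formed. The hypothesis $\eta(s)\le\eta(t)$ in (v) is relative to the fixed $\eta$, so the implication holds pointwise under that assignment rather than as a closed validity. I would state this explicitly so that (v) is not misread as a schema valid for all assignments.
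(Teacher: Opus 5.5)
Your proposal is correct: each item is a one-step unfolding of the clauses of Definition~\ref{def:sat} at a fixed $(\alpha,\eta,i)$, with $\varphi,\psi$ treated as black boxes, and this is the same direct semantic check the paper relies on (it imports the proposition from \cite{zhang2024temporal} without reproving it, and in the proof of Theorem~\ref{thm:soundness} uses it alongside ``direct inspection''). Your side remarks are also accurate: positivity of $\eta(t)$ plays no role in these equivalences, and (v) holds relative to the fixed assignment $\eta$, which matches how the paper states it.
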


As an immediate structural corollary, every $\EL(\Nat)$ formula is equivalent to a strict negation-normal form (NNF) where $\neg$ occurs exclusively as a terminal operator on propositional atoms.

\section{Forward Embedding into Monadic Arithmetic}\label{sec:forward}

Let $\FO(\Nat,<,+;\mathcal{P})$ represent first-order logic over the standard structure $(\Nat,<,+)$ expanded by unary predicates $P_p$ for each $p\in\mathcal{P}$. Atomic formulas consist of $P_p(x)$, standard identity $x=y$, inequality $x<y$, and equalities for additive Presburger terms.

\begin{theorem}[Forward embedding]\label{thm:forward}
There exists a linear-time translation $T$ assigning to every $\sf{EL}(\Nat)$ formula
$\varphi$ and every arithmetic term $z$ denoting the current evaluation point an
arithmetical first-order formula $T_z(\varphi)$ such that, for every interpretation
$\alpha$, assignment $\eta$, and point $m\in\Nat$,
\[
(\alpha,\eta,m)\models\varphi
\;\Longleftrightarrow\;
\bigl(\Nat,<,+,(P_p^\alpha)_{p\in\mathcal{P}}\bigr),\eta[z\mapsto m]
\models T_z(\varphi),
\]
where bound variables are alpha-renamed to avoid capture of the distinguished variable
$z$ and of the fresh variables introduced by the modal clauses.
\end{theorem}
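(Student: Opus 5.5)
We define $T_z$ by structural recursion on $\varphi$, following the clauses of Definition~\ref{def:sat} one by one. Here $z$ is an arbitrary Presburger term, not necessarily a variable. The clauses are:
\begin{align*}
T_z(p) &= P_p(z), \qquad T_z(\neg\varphi)=\neg T_z(\varphi),\qquad T_z(\varphi\wedge\psi)=T_z(\varphi)\wedge T_z(\psi),\\
T_z(\varphi_u) &= T_{z+u}(\varphi),\\
T_z(\mdiamond_t\varphi) &= \exists y\,\bigl(z\le y \wedge y< z+t \wedge T_y(\varphi)\bigr),\\
T_z(\boldBox_t\varphi) &= \forall y\,\bigl((z\le y \wedge y< z+t) \to T_y(\varphi)\bigr),\\
T_z(Qx\,\varphi) &= Qx\,T_z(\varphi).
\end{align*}
Disjunction is handled like conjunction. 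In each modal clause, $y$ is a fresh variable occurring neither in $\varphi$, nor in $t$, nor in $z$. The relation $z\le y$ abbreviates $z<y\vee z=y$. Additive terms of \EL{} are literally Presburger terms (numerals, variables, sums), so $z+u$ and $z+t$ are legitimate terms of $\FO(\Nat,<,+;\mathcal P)$. In the quantifier clause we first alpha-rename $x$ so that it occurs neither in $z$ nor among the modal variables introduced earlier; this is exactly the capture-avoidance proviso in the statement.

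Correctness is proved by induction on $\varphi$, simultaneously for all $\alpha$, $\eta$, terms $z$, and $m=\eta(z)$. We write $\mathfrak N_\alpha=(\Nat,<,+,(P^\alpha_p)_p)$.

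\begin{itemize}
\item \emph{Atoms.} These are immediate, since $p\in\alpha(m)$ iff $m\in P^\alpha_p$.
\item \emph{Boolean cases.} These follow directly from the induction hypothesis.
\item \emph{Displacement.} We have $(\alpha,\eta,m)\models\varphi_u$ iff $(\alpha,\eta,m+\eta(u))\models\varphi$. By the induction hypothesis applied with the term $z+u$, this holds iff $\mathfrak N_\alpha,\eta\models T_{z+u}(\varphi)$, because $\eta(z+u)=m+\eta(u)$. This is why we state the hypothesis for arbitrary terms $z$, not just variables.
\item \emph{Modalities.} The witness $j\in[m,m+\eta(t))$ corresponds to the value of $y$. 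Because $y$ is fresh, $\eta[y\mapsto j]$ agrees with $\eta$ on $t$, on $z$ and on the free variables of $\varphi$. The induction hypothesis with term $y$ then gives the equivalence.
\item \emph{Quantifiers.} Since $x\notin z$, the term $z$ has the same value under $\eta[x\mapsto n]$ as under $\eta$. The case then follows by the induction hypothesis.
\end{itemize}
The positivity side condition on modal lengths plays no role in correctness: the translation is faithful for any value of $t$.

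For the linear-time claim, each clause emits a constant amount of new syntax plus copies of $z$ and $t$. The naive clause $T_{z+u}$ nests terms, and repeated displacements make the current-point term grow, which would destroy linearity. To fix this, we introduce a fresh variable at each displacement:
\[
T_z(\varphi_u)=\exists w\,\bigl(w=z+u\wedge T_w(\varphi)\bigr).
\]
With this choice the current point is always either the initial $z$ or a single variable. Every emitted piece then has size $O(1)$ plus the size of the subscript term $u$ or $t$ of the current node. Summing over the syntax tree gives $|T_z(\varphi)|=O(|\varphi|+|z|)$. Numerals are copied verbatim in binary, so binary encoding costs nothing extra.

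The main obstacle is bookkeeping rather than any deep step. Fresh variables must be chosen consistently so that the induction hypothesis applies to modified assignments without interference, and the term-growth issue must be handled to retain linearity. Both are settled by the fresh-variable discipline, and the induction hypothesis must be stated in the generalized form quantifying over all terms $z$.
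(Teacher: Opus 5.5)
Your proposal is correct and follows the paper's proof: the recursive clauses for $T_z$ are the same (including a fresh bounded variable for $\mdiamond_t$ and $\boldBox_t$ and capture-avoiding renaming at quantifiers), and so is the structural induction, which you state for arbitrary terms $z$ with $m=\eta(z)$. The only difference is the size bound. The paper keeps $T_{z+u}$ and gets linearity by representing the output as a shared DAG, whereas you name each displaced point by a fresh $w$ with $w=z+u$, which gives a linear-size formula tree outright. Two small points on your version: its correctness case is the same fresh-variable argument as for the modalities, and for a compound initial $z$ you should bind $z$ once to a fresh variable to get $O(|\varphi|+|z|)$, since otherwise $z$ is copied at every top-level node.
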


\begin{proof}
Define $T_z$ by structural recursion:
\begin{align*}
T_z(p)&:=P_p(z),\\
T_z(\neg\varphi)&:=\neg T_z(\varphi),\\
T_z(\varphi \,\mbox{op}\, \psi)&:=T_z(\varphi)\,\mbox{op}\,T_z(\psi),
   \quad (\mbox{op}\in\{\wedge,\vee\}),\\
T_z(\varphi_u)&:=T_{z+u}(\varphi),\\
T_z(Qx\,\varphi)&:=Qx\,T_z(\varphi),\quad(Q\in\{\exists,\forall\}),\\
T_z(\mdiamond_t\varphi)&:=\exists j\,(z\leq j\wedge j<z+t\wedge T_j(\varphi)),\\
T_z(\boldBox_t\varphi)&:=\forall j\,\bigl((z\leq j\wedge j<z+t)\to T_j(\varphi)\bigr),
\end{align*}
with $j$ fresh and with the harmless alpha-renaming convention just stated.
Correctness is by structural induction on $\varphi$. The atomic, Boolean, and
first-order quantifier cases are immediate from the semantics and the induction
hypothesis. The displacement clause uses
$(\alpha,\eta,m)\models\varphi_u$ iff
$(\alpha,\eta,m+\eta(u))\models\varphi$, which is exactly the induction hypothesis for
$T_{z+u}(\varphi)$. The modal clauses match the half-open interval
$[m,m+\eta(t))$ by quantifying over the fresh variable $j$ with
$z\le j<z+t$. Each constructor contributes a constant-size translation node when the
output is represented as a shared directed acyclic graph, so the translation is linear
in $|\varphi|$ under the binary encoding convention of Definition~\ref{def:terms}.
\end{proof}

\begin{corollary}[Analytical upper bound]\label{cor:upper}
Satisfiability of arbitrary closed $\sf{EL}(\Nat)$ formulas resides in $\Sigma^1_1$, and algorithmic validity resides in $\Pi^1_1$.
\end{corollary}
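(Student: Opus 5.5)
The plan is to combine Theorem~\ref{thm:forward} with the standard analytical representation of second-order existential quantification over predicates on $\Nat$. For a closed formula $\varphi$ over the finite set $\mathcal{P}=\{p_1,\dots,p_k\}$, $\varphi$ is satisfiable iff there is an interpretation $\alpha$ with $(\alpha,\emptyset,0)\models\varphi$. An interpretation is the same as a $k$-tuple of subsets $(P_{p_1}^\alpha,\dots,P_{p_k}^\alpha)$ of $\Nat$. By Theorem~\ref{thm:forward}, this is equivalent to
\[
\exists P_{p_1}\cdots\exists P_{p_k}\;\bigl(\Nat,<,+,P_{p_1},\dots,P_{p_k}\bigr),[z\mapsto 0]\models T_z(\varphi).
\]
We then substitute $0$ for $z$, obtaining the closed formula $T_0(\varphi)$.

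The key step is to observe that the matrix $T_0(\varphi)$ is a first-order formula of arithmetic whose only non-arithmetical symbols are the unary set variables $P_{p_i}$. The structure $(\Nat,<,+)$ is arithmetical, since $<$ and $+$ are recursive. Hence the relation ``$(\Nat,<,+,\vec P)\models T_0(\varphi)$'' is arithmetical in the set parameters $\vec P$, uniformly in $\varphi$, because the translation $T$ is computable (indeed linear-time). Prefixing the existential set quantifiers yields a $\Sigma^1_1$ definition of $\{\ulcorner\varphi\urcorner : \varphi \text{ satisfiable}\}$. By the normal form theorem, the finitely many set quantifiers can be merged into one, or into a single function quantifier, if the standard definition requires a single quantifier. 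Uniformity in $\varphi$ is obtained with an arithmetical satisfaction predicate: we use the truth predicate for first-order formulas over $(\Nat,<,+,\vec P)$ relativised to set parameters. This is $\Delta^1_1$ in $\vec P$, and is in fact definable by an existential quantifier over a satisfaction function, which keeps the whole expression $\Sigma^1_1$.

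For validity, $\varphi$ is valid iff $\neg\varphi$ is unsatisfiable. Since $\neg\varphi$ is again a closed $\EL(\Nat)$ formula and its code is computable from that of $\varphi$, the validity set is the complement of a recursive preimage of a $\Sigma^1_1$ set. It therefore lies in $\Pi^1_1$. Equivalently, one can write validity directly as $\forall\vec P\,T_0(\varphi)$.

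The main obstacle is the uniformity claim in the key step. The formulas $T_0(\varphi)$ have unboundedly many first-order quantifier alternations, so we cannot simply cite a fixed arithmetical level. I would handle this in either of two ways. The first is to quantify existentially over a Tarskian satisfaction function $S:\mathrm{Sub}(T_0(\varphi))\times\Nat^{<\omega}\to\{0,1\}$, whose correctness conditions are arithmetical (indeed $\Pi^0_1$), so that it absorbs into the $\Sigma^1_1$ prefix. The second is to use the Kleene normal-form fact that $\Sigma^1_1$ is closed under number quantifiers and recursive substitution. I expect the first route to be cleanest to write down.
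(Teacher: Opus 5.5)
Your proposal is correct and follows the same route as the paper: apply Theorem~\ref{thm:forward} at the point $0$, then prefix existential second-order quantifiers over the monadic predicates $P_p$ to the arithmetical formula $T_0(\varphi)$ for satisfiability, and universal ones for validity. Your extra uniformity argument, which existentially guesses a Tarskian satisfaction function, fills in a step the paper leaves implicit. One small correction: the Tarski clause for $\exists$ is $\Pi^0_2$ rather than $\Pi^0_1$ unless you also guess Skolem witnesses, but this is harmless, since any arithmetical matrix under an existential function quantifier is still $\Sigma^1_1$.
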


\begin{proof}
By Theorem~\ref{thm:forward}, the satisfiability of an $\EL$ formula $\varphi$ is equivalent to the satisfiability of the formula $T_0(\varphi)$. Hence deciding satisfiability means existentially quantifying the continuous array of possible monadic subset interpretations $P_p^\alpha$. This can be expressed by a second-order existential prefix $\exists (P_p)_{p \in \mathcal{P}}$ appended to an arithmetical first-order formula, bounding the procedure within $\Sigma^1_1$. Validity is obtained by the dual quantification $\forall (P_p)_{p \in \mathcal{P}}$, bounded in $\Pi^1_1$.
\end{proof}

\section{\texorpdfstring{$\Sigma^1_1$}{Sigma11}-Hardness via Two-Counter Machines}\label{sec:undec-2cm}

By Corollary \ref{cor:upper}, the $\EL(\Nat)$-satisfiability problem is in $\Sigma^1_1$. Here we use a reduction to the recurring problem for nondeterministic two-counter machines to show its $\Sigma^1_1$-hardness.
A nondeterministic two-counter machine (2CM)~\cite{minsky1967} comprises a finite control $(Q,q_0,q^*,I)$ alongside absolute counters $C_1,C_2\in\Nat$ driving functional increment and test-and-decrement branches. The recurring problem asks whether there is an infinite computational sequence starting from $(q_0,0,0)$ and containing the loop state $q^*$ infinitely often.

\begin{lemma}[Harel-Pnueli-Stavi~\cite{harel1983}]\label{lem:hps}
The recurring problem for nondeterministic two-counter machines is $\Sigma^1_1$-complete.
\end{lemma}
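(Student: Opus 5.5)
This is the Harel--Pnueli--Stavi result, so the plan is to reconstruct its standard proof in two halves: membership in $\Sigma^1_1$ and $\Sigma^1_1$-hardness.

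\textbf{Membership.} We encode an infinite run of the machine as a function $f:\Nat\to\Nat$, where $f(n)$ codes the configuration $(q_n,c^1_n,c^2_n)$ by a standard pairing function. The machine has an infinite run from $(q_0,0,0)$ visiting $q^*$ infinitely often iff
\[
\exists f\,\bigl[f(0)=\langle q_0,0,0\rangle\wedge\forall n\,\mathrm{Step}(f(n),f(n+1))\wedge\forall n\,\exists m>n\,\mathrm{State}(f(m))=q^*\bigr].
\]
Here $\mathrm{Step}$ is the decidable one-step relation read off the finite instruction set $I$, and $\mathrm{State}$ projects out the control state. The matrix is arithmetical (in fact $\Pi^0_2$), so the whole statement is $\Sigma^1_1$.

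\textbf{Hardness.} The plan has two steps.

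First, we use the classical fact that the set of (codes of) recursive trees $T\subseteq\Nat^{<\omega}$ with an infinite path is $\Sigma^1_1$-complete. Equivalently, recurrence for nondeterministic Turing machines is complete: such a machine at each step may nondeterministically choose between a move and a ``branch'' event, and we ask whether some infinite run makes infinitely many branch events. Concretely, we build from $T$ a nondeterministic machine $M_T$ as follows:
\begin{enumerate}
\item $M_T$ guesses a path node by node: it guesses $a_k$, checks deterministically that $\langle a_0,\dots,a_k\rangle\in T$, and then enters a distinguished state $q^*$.
\item The membership checks always halt, because $T$ is recursive.
\item If a check fails, $M_T$ loops forever in a state other than $q^*$.
\end{enumerate}
Then $q^*$ recurs along some run iff $T$ has an infinite branch.

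Second, we simulate this machine by a nondeterministic two-counter machine via Minsky's construction. We encode tape contents as $2^a3^b5^c$ with auxiliary counters, and then reduce to two counters via the $2^x3^y$ trick. The simulation faithfully tracks when the simulated machine enters $q^*$, so we let the two-counter machine enter its own loop state exactly at those simulated steps. Nondeterministic choices are inherited directly. Since each simulated step takes finitely many machine steps, recurrence is preserved in both directions.

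\textbf{Main obstacle.} The delicate part is making each simulation step terminate. Otherwise a bad run could stall inside a simulation gadget without recurring, or, worse, recur spuriously. I would handle this by placing $q^*$ only after a completed simulated step, and by noting that every Minsky gadget halts deterministically.

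Since the statement is cited from~\cite{harel1983}, in the paper it suffices to invoke it. The hardness direction above is the one that will be needed later, for the $\Sigma^1_1$-hardness of $\EL(\Nat)$ in \S\ref{sec:undec-2cm}.
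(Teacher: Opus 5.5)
The paper gives no argument of its own for this lemma; it imports the result from Harel--Pnueli--Stavi. Your reconstruction is the standard proof behind that citation and is correct: one function quantifier over a $\Pi^0_2$ matrix for membership, and for hardness Kleene's $\Sigma^1_1$-complete set of ill-founded recursive trees, a path-guessing machine, and Minsky's deterministic, terminating counter gadgets.

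Two small refinements. First, a finitely branching machine must guess each $a_k$ by an increment-or-stop loop, so some runs guess forever. This is harmless precisely because such runs never return to $q^*$, and it is exactly where recurrence is needed: mere existence of an infinite run is only $\Pi^0_1$ by K\"onig's lemma. Second, your description of Minsky's construction conflates its two stages. The tape halves are stored as base-$b$ numerals in counters, and the prime-power coding $2^a3^b5^c7^d$ is what then packs those counters into one counter, with the second counter used as scratch.
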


Now we can show:

\begin{theorem}[$\Sigma^1_1$-hardness]\label{thm:two-counter}
The satisfiability problem for closed $\sf{EL}(\Nat)$ formulas is $\Sigma^1_1$-hard.
\end{theorem}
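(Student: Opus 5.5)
We reduce from the recurring problem of Lemma~\ref{lem:hps}. Given a nondeterministic 2CM $M=(Q,q_0,q^*,I)$, we construct, computably and uniformly, a closed formula $\varphi_M$ over a finite proposition set $\mathcal{P}_M$. The formula will be satisfiable iff $M$ has an infinite run from $(q_0,0,0)$ visiting $q^*$ infinitely often. Since $\Sigma^1_1$-hardness is preserved under many-one reductions, this proves Theorem~\ref{thm:two-counter}.

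\textbf{Encoding.} A run is written as an $\omega$-word made of consecutive blocks, one per configuration. Block $k$ starts with a position carrying the marker $\event{s}$ and exactly one state proposition $q\in Q$. It then contains a segment of $c_1$ positions labelled $\event{a}$, a separator $\event{h}$, and a segment of $c_2$ positions labelled $\event{b}$. Each position carries exactly one label.

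The clauses of $\varphi_M$ are as follows.
\begin{itemize}
\item \emph{Local syntax.} A first-order formula $\forall x(\ldots)_x$ with displacement $1$ (written $\psi_1$) expresses that the word has the regular shape $(\event{s}\,\event{a}^*\,\event{h}\,\event{b}^*)^\omega$ and carries the state labels correctly. This is an $\FO(\Nat,<)$-style constraint, and $\mdiamond$ and displacements suffice to write it.
\item \emph{Initial condition.} Position $0$ carries $\event{s}\wedge q_0$, and $\event{h}_1\wedge\event{s}_2$ holds, so both counters are initially $0$.
\item \emph{Recurrence.} $\forall x\,\exists y\,(q^*)_{x+y}$.
\end{itemize}

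\textbf{Transitions.} The key tool is quantification over lengths that are reused. Suppose there is a block start at $x$, the separator is at $x+1+n_1$, and the next block start is at $x+2+n_1+n_2$. We can pin these values down by
\[
\exists n_1\exists n_2\bigl[(\boldBox_{n_1+1}\neg\event{h})_{1}\wedge \event{h}_{n_1+1}\wedge(\boldBox_{n_2+1}\neg\event{s})_{n_1+1}\wedge \event{s}_{n_1+n_2+2}\bigr]_x .
\]
The boxes are written with bound $+1$ and include the endpoint shifted suitably, so that all modal lengths are $\ge1$ as Definition~\ref{def:formulas} requires. The next block is then described with its own existentially quantified $m_1,m_2$, anchored at $x+n_1+n_2+2$.

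For each instruction type we then impose the corresponding arithmetic relation by placing $\event{h}$ and $\event{s}$ at the correct displacements in the next block. For example, increment of $C_1$ from $q$ to $q'$ requires $q'$ at the next start, $\event{h}$ at offset $n_1+2$ from it, and $\event{s}$ at offset $n_1+n_2+3$. A test-and-decrement branch has two cases:
\begin{itemize}
\item the zero branch requires $\event{h}_1$ at the current block and copies both lengths unchanged;
\item the decrement branch requires $\neg\event{h}_1$ and uses the lengths $n_1$ and $n_1+1$ in the opposite roles.
\end{itemize}
Nondeterminism is handled as a disjunction over the instructions available at $q$. The whole clause is $\forall x(\event{s}_x\to\bigvee_{q}(q_x\wedge\bigvee_{\iota\in I(q)}\mathrm{Step}_\iota(x)))$.

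\textbf{Correctness.} For soundness, take an accepting recurring run and write out its block encoding; every clause holds by construction. For completeness, take a model $\alpha$. By the shape and transition clauses, the sequence of blocks decodes into configurations $(q_k,c_1^k,c_2^k)$ with $c_i^k$ equal to the segment lengths. Each consecutive pair is related by some instruction, and the first configuration is $(q_0,0,0)$. Recurrence yields infinitely many occurrences of $q^*$; since states sit only at block starts, $q^*$ recurs infinitely often in the run.

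\textbf{Main obstacle.} The delicate step is the exact-length bookkeeping under the constraint that modal lengths are uniformly positive. Empty counter segments ($n_i=0$) must still be described with boxes of length $\ge1$. The plan is to shift every window so that it covers the separator or block marker as well. For instance, we use $\boldBox_{n_1+1}(\event{a}\vee\event{s})$ starting at $x$ rather than a window over $\event{a}$ alone. We then check each of the three instruction cases with $n_i=0$ separately.
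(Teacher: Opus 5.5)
Your reduction follows the paper's route. You reduce from the recurring problem for nondeterministic two-counter machines, with one unary block per configuration, counter values carried by quantified lengths reused as displacements and box bounds, a disjunction over enabled instructions, and a $\forall\exists$ recurrence clause. The differences are cosmetic. The paper obtains the tiling of $\Nat$ from the chain of successor blocks rather than from a global shape constraint. It also encodes a counter value $c$ by $c+1$ cells, so every box bound is automatically of the form $x+1$ and your ``main obstacle'' never arises.

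Two steps, however, fail as written. First, the displayed pinning formula is unsatisfiable. At $x$, $(\boldBox_{n_1+1}\neg\event{h})_1$ covers $[x+1,x+n_1+2)$, which contains $x+n_1+1$, exactly where $\event{h}_{n_1+1}$ places the separator. Dropping the displacement $1$ repairs this, as does using your later window $\boldBox_{n_1+1}(\event{a}\vee\event{s})$ at $x$.

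Second, and more seriously, you cannot constrain the successor block by freely quantified $m_1,m_2$ together with point markers $\event{h}$ and $\event{s}$ at the target offsets. There are no equality atoms, and isolated markers need not be the separator and terminator of the \emph{next} block. Take an increment of $C_1$ from $(q,5,n_2)$ with next start $x'$. The blocks $(q',2,1)$ and then $(\cdot,1,n_2)$ place an $\event{h}$ at $x'+7=x'+n_1+2$ and an $\event{s}$ at $x'+n_2+8=x'+n_1+n_2+3$. So your clause accepts the illegal step $(q,5,n_2)\to(q',2,1)$, and the decoding in your correctness paragraph breaks.

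You must instead write the full box-and-marker pattern at $x'$ with the computed lengths substituted:
\begin{itemize}
\item $\boldBox_{n_1+2}(\event{a}\vee\event{s})$ at $x'$;
\item $\event{h}$ at $x'+n_1+2$;
\item $\boldBox_{n_2+1}(\event{h}\vee\event{b})$ at $x'+n_1+2$;
\item $\event{s}$ at $x'+n_1+n_2+3$.
\end{itemize}
The auxiliary $m_1,m_2$ then become unnecessary. This is precisely what the paper's consequent $B_{q'}(N(p,x,y),x+1,y)$ does, and with these two repairs your argument is complete.
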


\begin{proof}

Fix a nondeterministic two-counter machine $M=(Q,q_0,q^*,I)$.  We use propositions
$Q\cup\{c_1,c_2,e\}$ and first define exact letter formulas
\[
H_q:=q\wedge\bigwedge_{r\in Q\setminus\{q\}}\neg r\wedge\neg e\wedge\neg c_1\wedge\neg c_2,
\]
\[
E:=e\wedge\neg c_1\wedge\neg c_2\wedge\bigwedge_{r\in Q}\neg r,
\qquad
C_i:=c_i\wedge\neg e\wedge\neg c_{3-i}\wedge\bigwedge_{r\in Q}\neg r\quad(i=1,2).
\]
A configuration with state $q$ and counter values $(x,y)$ begins at position $p$ and is
encoded by
\[
\begin{aligned}
B_q(p,x,y):={}&(H_q)_p\wedge E_{p+1}\wedge(\boldBox_{x+1}C_1)_{p+2}\wedge E_{p+x+3}\\
&\wedge(\boldBox_{y+1}C_2)_{p+x+4}\wedge E_{p+x+y+5}.
\end{aligned}
\]
Thus a counter value $x$ is represented by a $C_1$-block of length $x+1$, and similarly
for $y$; all modal lengths are positive.  The next configuration begins immediately at
\[
N(p,x,y):=p+x+y+6.
\]
Let $\Phi_{\mathrm{init}}:=B_{q_0}(0,0,0)$.
For the transition formula, put the machine instructions into the usual finite normal
form: increments are always enabled; a decrement/test instruction for counter $C_i$ has
one branch for value $0$ and one branch for a positive value.  For each state $q$ and
each zero/positive case $\chi\in\{0,+\}^2$, write
$\bar x_0=0$, $\bar x_+=x'+1$, $\bar y_0=0$, and $\bar y_+=y'+1$.  The clause for
$q,\chi$ is
\[
\forall p\,\forall x'\,\forall y'\,
\Bigl(B_q(p,\bar x_\chi,\bar y_\chi)
\to\bigvee_{\iota\in I(q,\chi)}\mathrm{Succ}_\iota(p,\bar x_\chi,\bar y_\chi)\Bigr),
\]
where $I(q,\chi)$ is the finite set of instructions enabled in that case; an empty
disjunction is $\bot$.  The successor formula is obtained by writing the corresponding
next block at $N(p,\bar x_\chi,\bar y_\chi)$: for example,
\[
\mathrm{Succ}_{q:C_1:=C_1+1;q'}(p,x,y)=B_{q'}(N(p,x,y),x+1,y),
\]
and the positive decrement branch for $C_1$ is written with the current value $x'+1$
in the antecedent and successor value $x'$ in the consequent,
\[
B_q(p,x'+1,y)\to B_{q'}(N(p,x'+1,y),x',y).
\]
The zero branch is the analogous clause with current value $0$.  No arithmetic equality
atom is needed; the same variables are reused in the block positions and lengths.
Let $\Phi_{\mathrm{trans}}$ be the conjunction of these finitely many clauses.  Finally,
recurrence of $q^*$ is expressed by
\[
\Phi_{\mathrm{rec}}:=\forall r\,\exists s\,\exists x\,\exists y\,B_{q^*}(r+s,x,y).
\]
Set
$\varphi_M:=\Phi_{\mathrm{init}}\wedge\Phi_{\mathrm{trans}}\wedge\Phi_{\mathrm{rec}}$.

If $M$ has a recurring run, concatenate the corresponding exact blocks.  The initial,
transition, and recurrence clauses are then immediate.  Conversely, suppose
$\alpha\models\varphi_M$.  From $B_{q_0}(0,0,0)$ and $\Phi_{\mathrm{trans}}$, induction on
successor blocks yields an infinite sequence of adjacent blocks whose states and counter
values follow an enabled run of $M$.  Because the formulas $H_q,E,C_1,C_2$ are mutually
exclusive and every successor starts exactly at the end of the current block, no state
marker can occur inside a block of this chain; the adjacent blocks therefore tile
$\Nat$.  Hence every block satisfying $B_q(p,x,y)$ is a boundary of this run.  The
formula $\Phi_{\mathrm{rec}}$ says that such $q^*$-boundaries occur beyond every cutoff,
so the extracted run visits $q^*$ infinitely often.  Thus $\varphi_M$ is satisfiable iff
$M$ has a recurring run.  The construction uses one constant-size family of clauses per
state, instruction, and zero/positive case, and all numerals are fixed small constants;
hence $|\varphi_M|=\mathrm{poly}(|M|)$.
\end{proof}

\begin{corollary}[Analytical classification]\label{cor:complete}
The satisfiability problem for closed $\sf{EL}(\Nat)$ formulas is $\Sigma^1_1$-complete; the validity problem is $\Pi^1_1$-complete.
\end{corollary}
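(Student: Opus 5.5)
The corollary follows by combining the upper bound of Corollary~\ref{cor:upper} with the lower bound of Theorem~\ref{thm:two-counter}, and then dualizing for validity. The only care needed is that the reductions are many-one and computable, and that closed formulas are handled uniformly.

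\emph{Satisfiability.} Membership in $\Sigma^1_1$ is Corollary~\ref{cor:upper}. Given a closed $\varphi$, the translation $T_0(\varphi)$ of Theorem~\ref{thm:forward} is computable, and satisfiability of $\varphi$ is the statement $\exists (P_p)_{p\in\mathcal{P}}\,T_0(\varphi)$. This is a second-order existential prefix over an arithmetical matrix; the matrix uses $+$, which is first-order definable in $(\Nat,+,\cdot)$. Hardness is Theorem~\ref{thm:two-counter}, which gives a computable (indeed polynomial) map $M\mapsto\varphi_M$ such that $M$ has a recurring run iff $\varphi_M$ is satisfiable. By Lemma~\ref{lem:hps}, the recurring problem is $\Sigma^1_1$-complete, so satisfiability is $\Sigma^1_1$-hard and hence $\Sigma^1_1$-complete.

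\emph{Validity.} A closed formula $\varphi$ is valid iff $\neg\varphi$ is unsatisfiable, and $\varphi\mapsto\neg\varphi$ is a computable map on closed formulas. Validity is therefore the complement of satisfiability under a computable reduction, in both directions.
\begin{itemize}
\item \emph{Membership.} Validity lies in $\Pi^1_1$, since it is the complement of a $\Sigma^1_1$ set pulled back along $\varphi\mapsto\neg\varphi$. Equivalently, it is $\forall(P_p)\,T_0(\varphi)$, as noted in Corollary~\ref{cor:upper}.
\item \emph{Hardness.} Take any $\Pi^1_1$ set $A$. Its complement is $\Sigma^1_1$, so it reduces to satisfiability via some computable $f$. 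Then $x\in A$ iff $f(x)$ is unsatisfiable iff $\neg f(x)$ is valid, so $A$ reduces to validity.
\end{itemize}

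No step is a genuine obstacle. The one point to state explicitly is that the $\Sigma^1_1$-hardness from Theorem~\ref{thm:two-counter} is via many-one reductions, which is exactly what lets it dualize cleanly to $\Pi^1_1$-hardness of validity.
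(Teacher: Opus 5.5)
Your proposal is correct and follows the paper's proof. The upper bounds come from Corollary~\ref{cor:upper}, $\Sigma^1_1$-hardness of satisfiability comes from Theorem~\ref{thm:two-counter}, and $\Pi^1_1$-hardness of validity comes from negating the reduction. The paper simply says ``consider $\neg\varphi_M$'': $M$ has no recurring run iff $\neg\varphi_M$ is valid, which is your complement argument specialised to the reduction $M\mapsto\varphi_M$.
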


\begin{proof}
Upper limits are given by Corollary~\ref{cor:upper}. The hardness of the satisfiability problem is given by Theorem~\ref{thm:two-counter}. The hardness of the validity problems follows by considering $\neg\varphi_M$.
\end{proof}

\textbf{Remark.} A reduction from the Post Correspondence Problem to the satisfiability problem for closed $\EL(\Nat)$ formulas was given in~\cite{zhang2024temporal} earlier.
\section{Position in the Descriptive Hierarchy}\label{sec:hierarchy}

In this section we wish to compare the class of  $\EL$-definable languages with standard classes of  $\omega$-languages. As is well-known, the $\MSO(\Nat,<;\mathcal{P})$-definable languages coincide with the  $\omega$-regular languages, and
the $\FO(\Nat,<;\mathcal{P})$-definable languages coincide with the 
$\omega$-$\LTL$-definable and the $\omega$-star-free rational languages.

\begin{example}[A non-$\omega$-regular language  in $\EL$]\label{ex:abcd}
We consider the standard alphabet of letters  $\Sigma = \{a,b,c,d\}$. Since interpretations are subsets of
$\mathcal P$ rather than singleton alphabet symbols, using $a,b,c,d$ alone would also
accept positions where several propositions hold simultaneously. Therefore, we let
$\ell_a:=a\wedge\neg b\wedge\neg c\wedge\neg d$ and define
$\ell_b,\ell_c,\ell_d$ analogously.  The closed formula
\[
\exists n\Bigl[\boldBox_{n+1}\ell_a\wedge(\boldBox_{n+1}\ell_b)_{n+1}
   \wedge(\boldBox_{n+1}\ell_c)_{(n+1)+(n+1)}
   \wedge(\boldBox_{n+1}\ell_d)_{(n+1)+(n+1)+(n+1)}\Bigr]
\]
defines, under the standard singleton-letter encoding over
$\{a,b,c,d\}$, the language
$L_{\mathrm{abcd}}=\{a^mb^mc^md^m\mid m\geq 1\}\cdot\Sigma^\omega$ (choose target
index $m=n+1$).  By the pumping lemma, $L_{\mathrm{abcd}}
\notin\omega\text{-Reg}$.
\end{example}

\begin{example}[An $\omega$-regular language outside $\EL$]\label{ex:parity-not-EL}
Let $\Sigma: =\{a,b,c,d\}$, and $L_{\mathrm{par}}$ over $\Sigma$  be defined as
$L_{\mathrm{par}}:=\{wcd^\omega \mid w\in\{a,b\}^*,\ |w|_a\equiv 0\pmod 2\}.$
This language is $\omega$-regular: a two-state automaton records the parity of the
number of $a$'s before the unique delimiter $c$ and then checks the $d^\omega$ tail.
If $L_{\mathrm{par}}$ were definable in $\EL(\Nat)$, Theorem~\ref{thm:forward} would
make it definable in $\FO(\Nat,<,+;\mathcal P)$.  Restricting such a definition to
ultimately constant words of the form $wc d^\omega$ would yield an $\FO[<,+]$
definition of finite-word parity.  This contradicts the classical lower bound that
parity is not definable in first-order logic with addition on finite ordered words,
equivalently parity is not in uniform $\mathrm{AC}^0$~\cite{fss1984,ajtai1983sigma,hastad1986}.
\end{example}

Now we obtain:

\begin{theorem}[Hierarchy classification matrix]\label{thm:hierarchy} We have:\\

\textbf{(i)} $L(\FO(\Nat,<;\mathcal{P})) \subsetneq L(\sf{EL}(\Nat))\subseteq L(\FO(\Nat,<,+;\mathcal{P}))$.

\textbf{(ii)} $L(\sf{EL}(\Nat)) \setminus \omega\text{-Reg} \neq \emptyset$ and 
$\omega\text{-Reg} \setminus L(\sf{EL}(\Nat)) \neq \emptyset$.
\end{theorem}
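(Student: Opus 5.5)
Part (i) has three components. The right-hand inclusion is immediate from Theorem~\ref{thm:forward}: for a closed $\varphi$, the sentence $T_0(\varphi)$ defines exactly $L(\varphi)$.

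For the left inclusion $L(\FO(\Nat,<;\mathcal{P}))\subseteq L(\EL(\Nat))$ I would not translate first-order formulas directly. Instead I would invoke Kamp's theorem: every $\FO(\Nat,<;\mathcal P)$ sentence is equivalent over $\omega$-words to an $\LTL$ formula with $\mathsf{X}$ and $\mathcal U$. I then translate $\LTL$ into $\EL(\Nat)$ by structural recursion $\tau$, keeping the origin implicit:
\[
\tau(\mathsf X\varphi)=\tau(\varphi)_1,\qquad
\tau(\varphi\,\mathcal U\,\psi)=\exists n\bigl(\tau(\psi)_n\wedge(\neg\mdiamond_{n+1}\top\vee\cdots)\bigr).
\]
For the until clause, the window before $n$ may be empty, so I handle it with a disjunction:
\[
\tau(\varphi\,\mathcal U\,\psi)=\tau(\psi)\vee\exists n\bigl(\tau(\psi)_{n+1}\wedge\boldBox_{n+1}\tau(\varphi)\bigr).
\]
This respects the positivity convention on modal lengths, since $n+1\ge1$. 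Correctness is a routine induction, using the fact that the displacement and window semantics are relative to the current point $i$. Since quantified variables range over offsets from the current point, no capture problems arise provided bound variables are fresh.

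Strictness on the left follows from part (ii). Example~\ref{ex:abcd} gives an $\EL$-definable language that is not $\omega$-regular, hence not $\FO(<)$-definable, because $\FO(<)$ languages are star-free and therefore $\omega$-regular.

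For (ii), the first non-emptiness is again Example~\ref{ex:abcd}. I would briefly justify its pumping argument: a Büchi automaton accepting $L_{\mathrm{abcd}}$ would yield a finite automaton recognizing the finite-word prefixes of the form $a^mb^mc^md^m$ followed by a fixed continuation, and pumping in the $a$-block contradicts this. The second non-emptiness is Example~\ref{ex:parity-not-EL}.

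The main obstacle is that the restriction step in Example~\ref{ex:parity-not-EL} needs care, so I would make it explicit. Suppose $\theta$ is an $\FO(\Nat,<,+;\mathcal P)$ sentence defining $L_{\mathrm{par}}$. For a finite word $w$ of length $N$, I would interpret $wcd^\omega$ inside the finite structure $\{0,\dots,N\}$ with ternary addition. Since the tail is constant beyond $N$, quantifiers over $\Nat$ can be eliminated in favour of relativized quantifiers over $[0,N]$ together with symbolic reasoning about "large" elements. This is a uniform quantifier-relativization argument in the style of Presburger quantifier elimination: beyond position $N$ the structure is $(\Nat,<,+)$ with all predicates constant, which is definable uniformly.

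If that elimination proves delicate, my fallback is a weaker route. Here I would use only the well-known fact that Presburger arithmetic with a predicate that is eventually constant reduces to $\FO[<,+]$ on finite words. I would cite this together with the parity lower bound of~\cite{fss1984,ajtai1983sigma} to conclude.
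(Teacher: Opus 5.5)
Your proposal follows the paper's proof essentially step for step. You use Kamp's theorem plus a structural $\LTL$-to-$\EL$ translation for the left inclusion: your clause $\tau(\psi)\vee\exists n\,(\tau(\psi)_{n+1}\wedge\boldBox_{n+1}\tau(\varphi))$ for reflexive until, together with $\mathsf{X}\mapsto(\cdot)_1$, does the same job as the paper's strict-until clause $\psi_1\vee\exists x\,(\psi_{x+2}\wedge(\boldBox_{x+1}\varphi)_1)$, though you should delete your abandoned first display. You use Theorem~\ref{thm:forward} for the right inclusion, and Examples~\ref{ex:abcd} and~\ref{ex:parity-not-EL} for strictness and for (ii), exactly as the paper does.

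The restriction step you flag is also only asserted in the paper, and your relativization sketch does not yet prove it, because elements beyond $N$ interact with positions below $N$ through $+$. A clean argument goes as follows. Write each $z\in\Nat$ as $q(N+1)+r$ with $0\le r\le N$. Then $<$, $+$ (with a carry bit in $\{0,1\}$) and every $P_p$ become Boolean combinations of atoms about the $q$'s in $(\Nat,<,+)$ and atoms about the $r$'s in the finite structure of $wc$. In the style of Feferman--Vaught, the translated sentence then splits into finitely many $(\Nat,<,+)$-sentences, whose truth values are fixed, and $\FO[<,+]$-sentences about $wc$, which gives the required $\FO[<,+]$ definition of parity.
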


\begin{proof}
\textbf{(i)} By Kamp's theorem~\cite{kamp}, $\FO(\Nat,<;\mathcal{P})$-definability of $\omega$-languages coincides with $\LTL$-definability. 
It is straightforward to translate $\LTL$-formulas into equivalent $\EL$-formulas (i.e., preserving the languages):
For the \emph{Next} operator, replace $\textsf{X} \varphi$  by  $\varphi_1$, and for \emph{strict Until},
replace  $\varphi U_{>0} \psi$  by  
$\psi_1 \vee \exists x (\psi_{x+2} \land (\boldBox_{x+1} \varphi)_1)$.
By Example~\ref{ex:abcd}, the inclusion is proper. 
For the second inclusion, we apply Theorem~\ref{thm:forward}.

\textbf{(ii)} Immediate by Examples~\ref{ex:abcd} and \ref{ex:parity-not-EL}.
\end{proof}

\section{A Hilbert System and Oracle-Relative Completeness}\label{sec:hilbert}

In this section, we wish to obtain a Hilbert system for $\EL$-logic and to consider completeness questions.
Absolute completeness is impossible by Corollary~\ref{cor:complete}; completeness relative
only to Presburger arithmetic is also impossible, since Presburger arithmetic is decidable
(Cooper~\cite{cooper1972}, Presburger~\cite{presburger1929}) and any r.e. system using only
$\Th(\Nat,<,+)$ as oracle would have an r.e. proof relation, contradicting the
$\Pi^1_1$-completeness of $\EL$-validity. Therefore we use the monadic Presburger
validity theory as oracle.

\paragraph*{The system $\HEL$.}
The axioms below are given as schemas; rules are modus ponens, universal generalisation,
and necessitation for $\varphi_u,\boldBox_t,\mdiamond_t$ from valid formulas. 

Group~I (propositional and first-order):  
\begin{description}
\item[\textbf{(A1)}] propositional tautologies;
\item[\textbf{(A2)}]
$\forall x\,\varphi(x)\to\varphi[u/x]$ when $u$ free for $x$; 
\item[\textbf{(A3)}]
$\forall x(\varphi\to\psi)\to(\varphi\to\forall x\psi)$ when $x\notin\mathrm{free}(\varphi)$;
\item[\textbf{(A4)}] all closed additive identities true in $(\Nat,<,+)$.
\end{description}

Group~II (displacement): 
\begin{description}
\item[\textbf{(B1)}] displacement distributes over $\wedge,\vee,\neg$;
\item[\textbf{(B2)}] $(\varphi_u)_v\leftrightarrow\varphi_{u+v}$;
\item[\textbf{(B3)}] if $u=v$ is arithmetically true, $\varphi_u\leftrightarrow\varphi_v$;
\item[\textbf{(B4)}] $(Qx\,\varphi)_u\leftrightarrow Qx\,(\varphi_u)$ for $Q\in\{\exists,\forall\}$,
$x\notin\mathrm{free}(u)$.
\end{description}

Group~III (metric modalities):
\begin{description}
\item[\textbf{(C1)}] $\neg\boldBox_t\varphi\leftrightarrow\mdiamond_t\neg\varphi$ (dually);
\item[\textbf{(C2)}] $\boldBox_t(\varphi\to\psi)\to(\boldBox_t\varphi\to\boldBox_t\psi)$;
\item[\textbf{(C3)}] distribution of $\boldBox_t/\mdiamond_t$ over $\wedge/\vee$;
\item[\textbf{(C4)}] monotonicity in arithmetic length;
\item[\textbf{(C5)}] $\boldBox_1\varphi\leftrightarrow\varphi$, $\mdiamond_1\varphi\leftrightarrow\varphi$;
\item[\textbf{(C6)}] $(\boldBox_t\varphi)_u\leftrightarrow\boldBox_t(\varphi_u)$ (dually);
\item[\textbf{(C7)}] for $s,t\geq 1$, nested modalities compose:
$\mdiamond_s\mdiamond_{t+1}\varphi\leftrightarrow\mdiamond_{s+t}\varphi$ (dually);
\item[\textbf{(C8)}] \emph{additive decomposition} for $\eta(u),\eta(v)\geq 1$:
$\mdiamond_{u+v}\varphi\leftrightarrow\mdiamond_u\varphi\vee(\mdiamond_v\varphi)_u$
(dually). 
\end{description}

Group~IV (induction over displacement variables): 
\begin{description}
\item[\textbf{(D1)}] 
$\exists x\,\varphi(x)\leftrightarrow\varphi(0)\vee\exists y\,\varphi(y+1)$;
 (dually for $\forall$); \item[\textbf{(D2)}] induction
$\varphi(0)\wedge\forall x(\varphi(x)\to\varphi(x+1))\to\forall x\,\varphi(x)$.
\end{description}

\begin{theorem}[Soundness]\label{thm:soundness}
If $\HEL\vdash\varphi$ then $\models\varphi$.
\end{theorem}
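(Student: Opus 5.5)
The proof goes by induction on the length of derivations in $\HEL$. The claim I carry through the induction is stronger than the statement: every derivable formula $\varphi$ (possibly open) satisfies $(\alpha,\eta,i)\models\varphi$ for every interpretation $\alpha$, every assignment $\eta$ and every point $i\in\Nat$. The theorem is the special case of closed $\varphi$ at $i=0$. Quantifying over all points $i$, not only the origin, is what makes the rules sound; in particular necessitation for $\varphi_u$, $\boldBox_t$ and $\mdiamond_t$ needs it. The plan has two steps. First, check that each axiom schema is valid in this strong sense. Second, check that each rule preserves validity.

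For the rules, modus ponens is immediate. Universal generalisation is sound because validity is over all assignments, so $\eta[x\mapsto n]$ is covered. Necessitation is sound because a formula true at every point is in particular true at $i+\eta(u)$ and at every $j\in[i,i+\eta(t))$. For the $\mdiamond_t$ case we also use that the window is nonempty, which holds by the standing convention $\eta(t)\ge 1$.

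Among the axioms, Group~I is routine first-order soundness. (A1) and (A4) hold pointwise, and (A4) holds because the term evaluation $\eta$ is the homomorphic one into $(\Nat,+)$. (A2) and (A3) need the usual substitution lemma: $(\alpha,\eta,i)\models\varphi[u/x]$ iff $(\alpha,\eta[x\mapsto\eta(u)],i)\models\varphi$ when $u$ is free for $x$. I would prove this lemma by a separate induction on $\varphi$. The displacement and modal clauses cause no difficulty, since they only evaluate terms and bind no variables.

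Groups~II and III mostly reduce to Proposition~\ref{prop:basic-vals} or to a one-line semantic computation. (B1), (B2), (C1), (C3) and (C4) are items (i)–(v) of that proposition. (B3) holds because $\eta(u)=\eta(v)$. (B4) uses $x\notin\mathrm{free}(u)$, so $\eta[x\mapsto n](u)=\eta(u)$. (C2) and (C5) are direct, the latter because $[i,i+1)=\{i\}$. (C6) is a reindexing of the windows. (C8) splits $[i,i+u+v)$ as $[i,i+u)\cup[i+u,i+u+v)$.

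I expect the main obstacle to be (C7). I would compute it directly. The left side holds iff some $k$ satisfies both $i\le k< i+s$ and $k+t\ge j\ge k$ for a witness $j$ of $\varphi$. The union of the windows $[k,k+t+1)$ over $k\in[i,i+s)$ is exactly $[i,i+s+t)$, and this uses $s,t\ge1$ in the way the paper indexes it. The dual case follows by (C1). If the index bookkeeping fails to give equality, I would record the precise range of $s,t$ for which the schema is valid.

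Group~IV follows from the properties of $\Nat$. (D1) holds because every $n\in\Nat$ is $0$ or $m+1$. (D2) is ordinary induction on $\Nat$ applied to the set $\{n \mid (\alpha,\eta[x\mapsto n],i)\models\varphi\}$. Both again use the substitution lemma for the terms $0$, $y+1$ and $x+1$.
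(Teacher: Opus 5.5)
Your proposal is correct and follows essentially the paper's route: verify each axiom schema semantically (the paper spells out only C7, C8, D1 and D2 and refers the rest to Proposition~\ref{prop:basic-vals}), and your C7 window computation is the same sum-range argument as the paper's. Your explicit induction invariant, truth at every point $i$ and under every assignment $\eta$, is what the paper leaves implicit, and it is what makes generalisation and necessitation go through.
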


\begin{proof}
We verify the non-trivial axioms.
\textbf{C7:} $(\alpha,\eta,i)\models\mdiamond_s(\mdiamond_t\varphi)$ iff
$\exists r<s,q<t.\,(\alpha,\eta,i+r+q)\models\varphi$; the attainable sums
$r+q$ form $[0,s+t-1)$, giving $\mdiamond_{s+t-1}\varphi$.

\smallskip\noindent\textbf{C8:} for $a=\eta(u),b=\eta(v)\geq 1$, use the partition
\[
[i,i+a+b)=[i,i+a)\cup[i+a,(i+a)+b).
\]
\smallskip\noindent\textbf{D1:} from $\Nat=\{0\}\cup\{y+1 \mid y \geq 0\}$; \textbf{D2} is mathematical induction.
Other axioms follow from Proposition~\ref{prop:basic-vals} and direct inspection.
\end{proof}

\paragraph*{The oracle}
\[
\begin{aligned}
\MONFO := \{\,\Theta\in\FO(\Nat,<,+;\mathcal{P}) \mid {}&
(\Nat,<,+,(P_p))\models\Theta\\
&\text{for every interpretation of the }P_p\,\}.
\end{aligned}
\]
This is the standard-validity theory of monadic Presburger arithmetic.

\begin{theorem}[Oracle-relative completeness]\label{thm:oracle}
Extend $\HEL$ with the rule: from $T_0(\varphi)\in\MONFO$ infer $\varphi$ (for closed
$\varphi$). Then $\models\varphi\Leftrightarrow\HEL+\MONFO\vdash\varphi$.
\end{theorem}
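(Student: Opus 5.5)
The plan is to prove the two directions separately. Completeness is essentially immediate from Theorem~\ref{thm:forward}. Soundness reduces to Theorem~\ref{thm:soundness} plus a check that the new oracle rule preserves validity.

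\textbf{Soundness ($\Leftarrow$).} I would argue by induction on the length of a derivation in $\HEL+\MONFO$ that every derived formula is valid. The axioms and the rules of $\HEL$ are handled exactly as in Theorem~\ref{thm:soundness}. For necessitation, note that it is only applied to valid premises, and that validity is preserved under $\varphi_u$, $\boldBox_t$ and $\mdiamond_t$ because these only move the evaluation point.

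The only new case is the oracle rule. Suppose $T_0(\varphi)\in\MONFO$ for closed $\varphi$, and let $\alpha$ be any interpretation. Then $(\Nat,<,+,(P^\alpha_p)_p)\models T_0(\varphi)$, since $\MONFO$ quantifies over every interpretation of the $P_p$, and in particular over the supports $P^\alpha_p$. By Theorem~\ref{thm:forward}, applied with $m=0$ and the empty assignment (legitimate because $\varphi$ is closed, so $T_0(\varphi)$ has no free variables besides the instantiated $z$), we get $(\alpha,\emptyset,0)\models\varphi$. Hence $\alpha\models\varphi$ for all $\alpha$.

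\textbf{Completeness ($\Rightarrow$).} Let $\varphi$ be closed with $\models\varphi$. Fix any family of unary predicates $(P_p)_p$ on $\Nat$ and set $\alpha(i)=\{p\mid i\in P_p\}$. This map is a bijection between interpretations $\alpha$ and families $(P_p)_p$, with $P^\alpha_p=P_p$. Since $\alpha\models\varphi$, Theorem~\ref{thm:forward} gives $(\Nat,<,+,(P_p))\models T_0(\varphi)$. As the family was arbitrary, $T_0(\varphi)\in\MONFO$, and one application of the oracle rule derives $\varphi$.

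\textbf{Main obstacle.} The only delicate points are bookkeeping. First, $T_0(\varphi)$ must genuinely be a formula of $\FO(\Nat,<,+;\mathcal{P})$. This holds because the clauses of $T$ use only $P_p$, $\le$, $<$ and additive terms; the DAG sharing is only a size representation, and the unfolded formula is still first-order. Second, the correspondence between interpretations and predicate families must be exact, so that quantifying over all $(P_p)$ in $\MONFO$ is the same as quantifying over all $\alpha$. Third, the positivity side condition on modal lengths must be respected: it is part of well-formedness of $\varphi$ and is inherited by the translation.

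I expect the argument to go through directly. I would also remark that the completeness is trivial in the sense that the Hilbert axioms of Groups~I--IV are not actually used; all the work is done by the oracle together with the faithful embedding.
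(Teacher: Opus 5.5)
Your proposal is correct and follows the paper's appendix proof essentially step for step. Soundness goes through Theorem~\ref{thm:soundness} together with the forward embedding instantiated at the supports $P_p^\alpha$ for the oracle rule. Completeness goes through the bijection $\alpha\leftrightarrow(P_p)_{p\in\mathcal{P}}$ and a single application of the oracle rule, with the same remark as the paper that the $\HEL$ axioms play no role in the completeness direction.

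One small point is left implicit by both you and the paper. The oracle rule only yields truth at origin $0$ on all words, so feeding its conclusion into necessitation relies on the easy shift-invariance of the forward-looking semantics: $(\alpha,\eta,i+k)\models\varphi$ iff $(\alpha',\eta,i)\models\varphi$, where $\alpha'(n)=\alpha(n+k)$.
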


\begin{proof}
The proof is provided in the Appendix.
\end{proof}

By Halpern~\cite{halpern1991monadic}, $\MONFO$ is itself $\Pi^1_1$-complete; this is the
correct level of relative completeness given Corollary~\ref{cor:complete}. Axiom C8
(additive decomposition) is the central modal-arithmetic primitive: it states that
inspection of a window of length $u+v$ factors through inspection of a window of
length~$u$ followed by a displaced window of length~$v$.

\section{The Existential Fragment \texorpdfstring{$\exists\EL(\Nat)$}{ExistsEL}}\label{sec:exists}

\begin{definition}[Existential fragment]\label{def:exists-el}
The existential fragment $\exists\sf{EL}(\Nat)$ consists of the closed formulas
in negation normal form generated by
\[
\varphi,\,\psi ::= p\mid \neg p\mid \varphi\wedge\psi\mid
\varphi\vee\psi\mid \varphi_u\mid \mdiamond_t\varphi\mid \exists x\,\varphi .
\]
Thus $\forall$ and $\boldBox$ are not primitive in the fragment, and negation
is allowed only in front of atomic propositions.  As throughout the paper, metric
bounds are required to be positive on every branch where the modality is used; in
applications this is enforced syntactically by writing bounds as $t+1$.
\end{definition}

\paragraph*{An explicit translation to existential Presburger arithmetic}

We compile $\exists\EL(\Nat)$-formulas into existential Presburger arithmetic with the standard binary
numeral encoding. Up to inessential bookkeeping, the construction is determined by the
following idea: the only semantic freedom in satisfiability is the choice of the monadic predicates
$P_p^\alpha\subseteq\Nat$. This freedom can be eliminated as follows: 
each diamond chooses one witness inside its window,
and each literal occurrence inspects \emph{one} truth value of \emph{one} proposition at \emph{one} arithmetically computed position. 
If two such inspection positions happen to collide, the two truth values must agree. 
Once those constraints are written down, the satisfiability question reduces to a
pure existential Presburger question.

Fix a closed formula $\varphi\in\exists\EL(\Nat)$.  First alpha-rename all bound
variables and assign a fresh variable $h_\delta$ to every occurrence
$\delta$ of a modality $\mdiamond_t$.  Let $\bar v_\varphi$ be the tuple of all
renamed first-order variables and all modal-witness variables $h_\delta$.  For every
literal occurrence $\rho$ introduce a fresh Boolean Presburger variable $b_\rho$.
We define simultaneously a quantifier-free Presburger formula $B_\tau(\psi)$ and,
for every literal occurrence reached during the recursion, its proposition
$p_\rho$ and absolute position term $a_\rho$:
\[
\begin{array}{rclcrcl}
B_\tau(p^\rho)&:=& b_\rho=1, && a_\rho&:=&\tau,
\quad p_\rho:=p,\\[0.2em]
B_\tau((\neg p)^\rho)&:=& b_\rho=0, && a_\rho&:=&\tau,
\quad p_\rho:=p,\\[0.2em]
B_\tau(\psi\wedge\chi)&:=&B_\tau(\psi)\wedge B_\tau(\chi),&&
B_\tau(\psi\vee\chi)&:=&B_\tau(\psi)\vee B_\tau(\chi),\\[0.2em]
B_\tau(\psi_u)&:=&B_{\tau+u}(\psi),&&
B_\tau(\exists x\,\psi)&:=&B_\tau(\psi),\\[0.2em]
B_\tau(\mdiamond_t^\delta\psi)&:=&(0\le h_\delta\wedge h_\delta<t)
   \wedge B_{\tau+h_\delta}(\psi).&&&&
\end{array}
\]
The clause for $\exists x$ is legitimate because all existential variables have already
been alpha-renamed and are placed in the global existential prefix.  Moving them to the
front preserves truth, since only existential first-order quantifiers are present.

Let $\operatorname{Occ}(\varphi)$ be the finite set of literal occurrences.  Define
\[
\begin{aligned}
\mathsf{Bool}_\varphi&:=\bigwedge_{\rho\in\operatorname{Occ}(\varphi)}
  (b_\rho=0\vee b_\rho=1),\\
\mathsf{Cons}_\varphi&:=\bigwedge_{\substack{\rho,\sigma\in\operatorname{Occ}(\varphi)\\
      p_\rho=p_\sigma}}
   \bigl(a_\rho=a_\sigma\rightarrow b_\rho=b_\sigma\bigr),\\
\mathsf{EP}(\varphi)&:=\exists \bar v_\varphi\,\exists (b_\rho)_{\rho\in\operatorname{Occ}(\varphi)}
   \bigl(B_0(\varphi)\wedge\mathsf{Bool}_\varphi\wedge\mathsf{Cons}_\varphi\bigr).
\end{aligned}
\]
Here implication is read as the Presburger formula
$a_\rho\ne a_\sigma\vee b_\rho=b_\sigma$.

The formula
$\mathsf{EP}(\varphi)$ is an existential Presburger formula of size polynomial in
$|\varphi|$; the only quadratic blow-up is the pairwise consistency condition.

\begin{lemma}[Correctness of the Presburger translation]\label{lem:exists-presburger}
For every closed $\varphi\in\exists\sf{EL}(\Nat)$,
\[
\varphi\text{ is satisfiable over }\omega\text{-words}
\quad\Longleftrightarrow\quad
(\Nat,<,+)\models \mathsf{EP}(\varphi).
\]
\end{lemma}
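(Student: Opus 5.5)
We prove both directions by a direct comparison of semantic witnesses with Presburger witnesses. The key auxiliary statement, proved by structural induction on subformulas $\psi$ of $\varphi$, is a \emph{witness lemma}. Fix an $\omega$-word $\alpha$, a valuation $\bar v$ of all renamed first-order and modal-witness variables, and an evaluation position term $\tau$ (evaluated under $\bar v$). Call an occurrence-truth assignment $b$ \emph{$\alpha$-faithful} if $b_\rho=1$ iff $p_\rho\in\alpha(\bar v(a_\rho))$ for every literal occurrence $\rho$. The lemma has two parts:
\begin{enumerate}
\item If $b$ is $\alpha$-faithful and $(\Nat,<,+),\bar v,b\models B_\tau(\psi)$, then $(\alpha,\eta,\bar v(\tau))\models\psi$, where $\eta$ is the restriction of $\bar v$ to the first-order variables.
\item Conversely, if $(\alpha,\eta,m)\models\psi$ and $\tau$ evaluates to $m$, then the values of the bound variables and of the $h_\delta$ inside $\psi$ can be chosen so that, with the faithful $b$, $B_\tau(\psi)$ holds.
\end{enumerate}
The induction is routine. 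Literal cases are exactly faithfulness. $\wedge,\vee$ are immediate; for $\vee$ only the chosen disjunct's witnesses matter, and the others may be set to $0$. Displacement matches $\tau+u$. The case $\exists x$ uses that $x$ is renamed apart, so its value is a free choice in $\bar v$. The diamond case sets $h_\delta$ to $j-m$ for the witness $j\in[m,m+\eta(t))$.

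For ($\Rightarrow$), take a model $\alpha$ of $\varphi$ at $0$. Apply part 2 of the witness lemma to obtain $\bar v$, and set $b_\rho:=[p_\rho\in\alpha(\bar v(a_\rho))]$. Then $\mathsf{Bool}_\varphi$ is trivial. $\mathsf{Cons}_\varphi$ holds because $b_\rho$ depends only on the pair $(p_\rho,\bar v(a_\rho))$, which is well defined since $\alpha$ is a function. Hence $\mathsf{EP}(\varphi)$ is true.

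For ($\Leftarrow$), take a satisfying valuation $\bar v,b$ of $B_0(\varphi)\wedge\mathsf{Bool}_\varphi\wedge\mathsf{Cons}_\varphi$. Define $\alpha(i):=\{p\mid \exists\rho\,(p_\rho=p,\ \bar v(a_\rho)=i,\ b_\rho=1)\}$, so $p$ is false at positions no occurrence inspects. Faithfulness then needs checking only for occurrences with $b_\rho=0$: if $p_\rho\in\alpha(\bar v(a_\rho))$, some $\sigma$ with the same proposition and position has $b_\sigma=1$, contradicting $\mathsf{Cons}_\varphi$. Part 1 of the witness lemma at $\tau=0$ gives $\alpha\models\varphi$.

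The main obstacle is the subtlety in the $\vee$ case and in faithfulness. The Presburger formula quantifies $b_\rho$ and $a_\rho$ for \emph{all} occurrences, including those in untaken disjuncts. These may carry arbitrary positions, and $\mathsf{Cons}_\varphi$ constrains them too. In ($\Rightarrow$) this is harmless, because defining $b$ faithfully everywhere makes $\mathsf{Cons}_\varphi$ automatic. In ($\Leftarrow$), faithfulness is derived for all occurrences at once from $\mathsf{Cons}_\varphi$ and the minimal choice of $\alpha$, so the induction never needs to know which branch was used. One must also check that the positivity convention on $t$ is respected, so that the window is nonempty exactly when a witness $h_\delta<t$ exists; this follows from the standing assumption $\eta(t)\ge1$, and in any case the equivalence holds even without it.
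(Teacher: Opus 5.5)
Your proposal is correct and follows essentially the same route as the paper. For $(\Rightarrow)$ you choose first-order witnesses and diamond offsets $h_\delta$ along a successful evaluation and set every $b_\rho$ faithfully, which makes $\mathsf{Cons}_\varphi$ automatic; for $(\Leftarrow)$ you build the word from the inspected positions (your minimal $\alpha$ coincides with the paper's construction under $\mathsf{Cons}_\varphi$) and conclude by structural induction on $B_\tau(\psi)$. Packaging the two inductions as a two-part witness lemma and deriving faithfulness for all occurrences at once is only a tidier presentation of the same argument.
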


\begin{proof}
Suppose first that $\alpha\models\varphi$.  Choose witnesses for the first-order
existential quantifiers and, along one successful evaluation of each used diamond
occurrence, choose $h_\delta$ to be the offset from the current point to the witnessing
point.  Variables belonging only to unused disjunctive branches may be assigned
arbitrarily.  For every literal occurrence $\rho$, set
$b_\rho=1$ iff $p_\rho\in\alpha(\eta(a_\rho))$.  Then
$\mathsf{Bool}_\varphi$ is immediate.  If two occurrences $\rho,\sigma$ mention the same
proposition and $\eta(a_\rho)=\eta(a_\sigma)$, they inspect the same proposition at the
same position of the same word, so $b_\rho=b_\sigma$; hence
$\mathsf{Cons}_\varphi$ holds.  A direct induction on the successful evaluation tree of
$\varphi$ gives $B_0(\varphi)$.

Conversely, assume $(\Nat,<,+)\models\mathsf{EP}(\varphi)$ and fix satisfying values for
$\bar v_\varphi$ and the Boolean variables $b_\rho$.  Define an $\omega$-word $\alpha$ as
follows.  For every proposition $p$ and position $n$, if there exists an occurrence
$\rho$ with $p_\rho=p$ and $\eta(a_\rho)=n$, put $p\in\alpha(n)$ exactly when
$b_\rho=1$; if no such occurrence exists, choose the truth value of $p$ at $n$
arbitrarily, say false.  The definition is independent of the chosen occurrence by
$\mathsf{Cons}_\varphi$.  Now prove by structural induction on subformulas that whenever
$B_\tau(\psi)$ is true under the fixed witnesses, then
$(\alpha,\eta,\eta(\tau))\models\psi$.  The literal cases follow from the construction
of $\alpha$; Boolean cases are immediate; displacement changes the current term from
$\tau$ to $\tau+u$; the existential case uses the globally fixed value of the renamed
variable; and the diamond case uses the fixed offset $h_\delta$, whose conjunct
$0\le h_\delta<t$ places the witnessing point inside the half-open interval required by
Definition~\ref{def:sat}.  Applying the induction to $B_0(\varphi)$ yields
$\alpha\models\varphi$.
\end{proof}

\begin{theorem}[Decidability and NP-completeness]\label{thm:exists-NP}
Satisfiability of closed $\exists\sf{EL}
(\Nat)$ formulas is decidable and
$\mathrm{NP}$-complete.
\end{theorem}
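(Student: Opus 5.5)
Membership in NP follows from Lemma~\ref{lem:exists-presburger}; hardness comes from a direct reduction from propositional SAT.

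\emph{Upper bound.} Given a closed $\varphi\in\exists\EL(\Nat)$, we first compute $\mathsf{EP}(\varphi)$. This is an existential Presburger sentence of size polynomial in $|\varphi|$, with numerals kept in binary. By Lemma~\ref{lem:exists-presburger}, $\varphi$ is satisfiable iff $(\Nat,<,+)\models\mathsf{EP}(\varphi)$. Satisfiability of existential Presburger arithmetic with binary constants is in NP: every satisfiable quantifier-free system of linear (in)equalities, after guessing the disjunctive branch, has a solution whose bit-size is polynomial in the input. This is the classical small-solution bound for integer programming (Papadimitriou; equivalently the Borosh--Treybig / von zur Gathen--Sieveking bounds). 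So an NP algorithm proceeds in three steps:
\begin{enumerate}
\item guess the disjunct of $B_0(\varphi)$, i.e.\ one branch of every $\vee$;
\item guess, for each consistency pair, whether $a_\rho\ne a_\sigma$ or $b_\rho=b_\sigma$, splitting each $\ne$ into $<$ or $>$;
\item guess polynomially bounded binary values for all variables and check the resulting linear constraints in polynomial time.
\end{enumerate}

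\emph{Main obstacle.} The delicate point is the size bound. $\mathsf{EP}(\varphi)$ must genuinely be polynomial: the position terms $a_\rho$ are sums along a path of the formula, hence linear in $|\varphi|$, and there are quadratically many consistency clauses. We must also make sure that the small-model theorem is applied to a conjunction of linear constraints after the branch guesses, rather than to a formula with disjunctions, so that the polynomial bound on solution size applies cleanly. Positivity of modal bounds only adds constraints $t\ge 1$, which are of the same form.

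\emph{Lower bound.} We reduce from propositional SAT (or 3-SAT). Given a CNF $F$ over variables $X_1,\dots,X_k$, take $\mathcal P=\{X_1,\dots,X_k\}$ and let $\varphi_F$ be $F$ itself, read as an $\exists\EL(\Nat)$ formula evaluated at position $0$. It uses only literals, $\wedge$ and $\vee$, so it lies in the fragment. Then $\varphi_F$ is satisfiable over $\omega$-words iff $F$ is satisfiable: a truth assignment determines $\alpha(0)$, and conversely $\alpha(0)$ yields an assignment. The reduction is linear, giving NP-hardness and hence NP-completeness. Decidability is an immediate consequence of the upper bound.
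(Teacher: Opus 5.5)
Your proposal is correct and matches the paper's proof. Membership in NP comes from the polynomial translation $\mathsf{EP}(\varphi)$, Lemma~\ref{lem:exists-presburger}, and Papadimitriou's small-solution bound for existential Presburger arithmetic, and NP-hardness comes from the same encoding of propositional SAT as a quantifier- and modality-free formula read at position $0$; your explicit guess-the-branch procedure (including splitting $a_\rho\neq a_\sigma$ into $<$ or $>$) just unpacks the Presburger upper bound that the paper cites without detail.
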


\begin{proof}
Membership in $\mathrm{NP}$ follows from Lemma~\ref{lem:exists-presburger} and the
classical $\mathrm{NP}$ upper bound for satisfiability of existential Presburger
arithmetic with binary numerals~\cite{papadimitriou1981ip}.  The translation
$\varphi\mapsto\mathsf{EP}(\varphi)$ is polynomial: it introduces one numeric witness for
each diamond, one Boolean variable for each literal occurrence, and only a quadratic
number of consistency implications.

For hardness, reduce propositional satisfiability.  Given a propositional formula $F$ over variables
$X_1,\ldots,X_m$, introduce propositions $p_1,\ldots,p_m$ and replace each
positive literal $X_i$ by $p_i$ and each negative literal $\neg X_i$ by $\neg p_i$.
The resulting $\exists\EL(\Nat)$ formula contains no quantifier and no modality.  A truth
assignment satisfying $F$ is realised by choosing the truth values of the propositions
$p_i$ at position $0$, and conversely every satisfying word determines such a truth
assignment from its letter at position $0$.  Thus $F$ is satisfiable iff the translated
$\exists\EL(\Nat)$ formula is satisfiable.
\end{proof}

\begin{corollary}[Finite certificate and unsatisfiability complexity]\label{cor:exists-cert}
Every satisfiable closed $\exists\sf{EL}(\Nat)$ formula has a certificate consisting of
polynomial-bit Presburger witnesses together with truth values for the finitely many
literal occurrences.  Consequently, unsatisfiability of closed $\exists\sf{EL}(\Nat)$
formulas is $\mathrm{coNP}$-complete.
\end{corollary}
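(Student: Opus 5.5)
The plan is to read the corollary off Lemma~\ref{lem:exists-presburger} together with the small-model property of existential Presburger arithmetic. First, given a satisfiable closed $\varphi\in\exists\EL(\Nat)$, Lemma~\ref{lem:exists-presburger} yields $(\Nat,<,+)\models\mathsf{EP}(\varphi)$. Since $\mathsf{EP}(\varphi)$ has size polynomial in $|\varphi|$, I would invoke the classical bound underlying \cite{papadimitriou1981ip}. That bound says a satisfiable existential Presburger formula (equivalently, after the usual normalisation, a disjunction of systems of linear inequalities $A\bar x\le\bar c$ with binary coefficients) has a solution whose entries have bit-length polynomial in the formula size.

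The one point to handle carefully is the disjunctions and the implications in $\mathsf{Cons}_\varphi$. I would not expand $B_0(\varphi)$ into DNF, which could be exponential. Instead I would guess the satisfied disjunct of each $\vee$ on the successful branch, and guess for each pair $(\rho,\sigma)$ whether $a_\rho\neq a_\sigma$ (split as $<$ or $>$) or $b_\rho=b_\sigma$. This leaves a single polynomial-size conjunction of linear constraints, to which the Papadimitriou bound applies. The certificate is then this polynomial-bit assignment to $\bar v_\varphi$, together with the bits $b_\rho$, which are exactly truth values for the literal occurrences. Checking it means evaluating the quantifier-free matrix $B_0(\varphi)\wedge\mathsf{Bool}_\varphi\wedge\mathsf{Cons}_\varphi$ on binary numbers, which takes polynomial time. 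By the converse direction of Lemma~\ref{lem:exists-presburger}, any verified certificate induces a word satisfying $\varphi$, so the certificates are sound.

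For the complexity claim, unsatisfiability is the complement of the NP-complete problem of Theorem~\ref{thm:exists-NP}, hence it lies in coNP. For coNP-hardness, the propositional reduction in the proof of Theorem~\ref{thm:exists-NP} maps $F$ to a formula that is satisfiable iff $F$ is. The same map therefore reduces propositional unsatisfiability, which is coNP-complete, to $\exists\EL(\Nat)$-unsatisfiability.

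The main obstacle is the polynomial bit-size claim for the witnesses, because the Presburger result concerns integer programming feasibility rather than arbitrary existential formulas. I would resolve it with the guess-a-branch normalisation described above, noting that the variables range over $\Nat$, which just adds nonnegativity constraints.
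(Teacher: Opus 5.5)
Your proposal is correct and follows the paper's proof: the certificate is a satisfying assignment of $\mathsf{EP}(\varphi)$, its polynomial bit bound comes from the small-solution property of existential Presburger arithmetic, and coNP-completeness is obtained by complementing Theorem~\ref{thm:exists-NP}. Your branch-guessing normalisation of the disjunctions in $B_0(\varphi)$, $\mathsf{Bool}_\varphi$ and $\mathsf{Cons}_\varphi$, together with the explicit hardness transfer from propositional unsatisfiability, only spells out what the paper cites as the standard small-certificate property.
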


\begin{proof}
The certificate is precisely a satisfying assignment for $\mathsf{EP}(\varphi)$.  The
polynomial bit bound is the standard small-certificate property behind the
$\mathrm{NP}$ upper bound for existential Presburger arithmetic.  Since satisfiability is
$\mathrm{NP}$-complete by Theorem~\ref{thm:exists-NP}, its complement, unsatisfiability,
is $\mathrm{coNP}$-complete.
\end{proof}

\paragraph*{A complete proof system for satisfiability of the existential fragment}
Let $\mathsf{PrA}$ be any fixed sound and complete axiomatisation of Presburger
arithmetic.  For
\[
\mathsf{EP}(\varphi)=\exists \bar y\,\Theta_\varphi(\bar y),
\]
where $\Theta_\varphi$ is quantifier-free, define a labelled calculus
$\mathsf{S}_{\exists\EL}$ for satisfiability judgements by the two terminal rules
\[
\frac{\mathsf{PrA}\vdash \Theta_\varphi(\bar n)}
     {\mathsf{S}_{\exists\EL}\vdash \mathrm{Sat}(\varphi)}
\qquad
\frac{\mathsf{PrA}\vdash \neg\exists\bar y\,\Theta_\varphi(\bar y)}
     {\mathsf{S}_{\exists\EL}\vdash \mathrm{Unsat}(\varphi)} ,
\]
where $\bar n$ ranges over tuples of numerals of the right arity.  The first rule is a
witness rule; the second is a Presburger refutation rule.  In both cases
$\Theta_\varphi$ is the matrix produced by the explicit translation above.

For standard validity proofs, let $\mathsf{H}^{\exists}_{\mathrm{rel}}$ denote the restriction of the
oracle-extended Hilbert system of Theorem~\ref{thm:oracle} to conclusions that belong to
$\exists\EL(\Nat)$.

\begin{theorem}[Complete proof systems for satisfiability of $\exists\EL(\Nat)$]\label{thm:hel-exists}
For every closed $\varphi\in\exists\sf{EL}(\Nat)$:
\begin{enumerate}[label=\textbf{(\roman*)}, leftmargin=2em]
\item $\mathsf{S}_{\exists\EL}\vdash\mathrm{Sat}(\varphi)$ iff $\varphi$ is satisfiable;
\item $\mathsf{S}_{\exists\EL}\vdash\mathrm{Unsat}(\varphi)$ iff $\varphi$ is unsatisfiable;
\item $\mathsf{H}^{\exists}_{\mathrm{rel}}\vdash\varphi$ iff $\models\varphi$.
\end{enumerate}
\end{theorem}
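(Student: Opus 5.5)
Each of the three items reduces to a result already established: (i) and (ii) to Lemma~\ref{lem:exists-presburger} combined with the soundness and completeness of $\mathsf{PrA}$, and (iii) to Theorem~\ref{thm:oracle}. Throughout we fix $\mathsf{EP}(\varphi)=\exists\bar y\,\Theta_\varphi(\bar y)$ with $\Theta_\varphi:=B_0(\varphi)\wedge\mathsf{Bool}_\varphi\wedge\mathsf{Cons}_\varphi$ quantifier-free, as produced by the explicit translation.

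For (i), suppose $\varphi$ is satisfiable. Lemma~\ref{lem:exists-presburger} then gives $(\Nat,<,+)\models\exists\bar y\,\Theta_\varphi(\bar y)$, so some tuple $\bar n\in\Nat$ of standard values satisfies $\Theta_\varphi$. Since $\Theta_\varphi(\bar n)$ is a closed quantifier-free sentence true in the standard model, it is a true Presburger sentence. Completeness of $\mathsf{PrA}$ yields $\mathsf{PrA}\vdash\Theta_\varphi(\bar n)$, and the witness rule derives $\mathrm{Sat}(\varphi)$. Conversely, the only rule concluding $\mathrm{Sat}(\varphi)$ is the witness rule. Soundness of $\mathsf{PrA}$ makes $\Theta_\varphi(\bar n)$ true, hence $\mathsf{EP}(\varphi)$ is true, and Lemma~\ref{lem:exists-presburger} gives satisfiability. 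Item (ii) is the same argument applied to the refutation rule. If $\varphi$ is unsatisfiable, Lemma~\ref{lem:exists-presburger} makes $\neg\mathsf{EP}(\varphi)$ a true Presburger sentence, which is provable in $\mathsf{PrA}$ by completeness. The reverse direction uses soundness together with the lemma.

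For (iii), the direction from provability to validity is Theorem~\ref{thm:oracle} specialised to restricted derivations. A derivation in $\mathsf{H}^{\exists}_{\mathrm{rel}}$ is in particular a derivation in $\HEL+\MONFO$, and that system is sound. For the direction from validity to provability, take a valid $\varphi\in\exists\EL(\Nat)$. Theorem~\ref{thm:oracle} gives $\HEL+\MONFO\vdash\varphi$. Since the conclusion $\varphi$ lies in $\exists\EL(\Nat)$, this derivation is admitted by $\mathsf{H}^{\exists}_{\mathrm{rel}}$, whose restriction concerns conclusions only. To make this robust, I would exhibit a one-step derivation. Theorem~\ref{thm:forward} shows that $T_0(\varphi)$ holds in every expansion $(\Nat,<,+,(P_p))$, so $T_0(\varphi)\in\MONFO$. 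The oracle rule then concludes $\varphi$ directly, and every formula in this derivation is in the fragment.

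The main point needing care is the interpretation of the restriction in $\mathsf{H}^{\exists}_{\mathrm{rel}}$: whether intermediate formulas must also lie in $\exists\EL(\Nat)$. The one-step oracle derivation removes this concern, because its only formula is the conclusion $\varphi$ itself. A smaller check is needed in (i): the witness tuple $\bar n$ consists of standard naturals representable as numerals. This holds because satisfaction is taken in $(\Nat,<,+)$, and binary numerals are available by Definition~\ref{def:terms}.
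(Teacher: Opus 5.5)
Your proposal is correct and follows the paper's proof essentially step for step. Items (i) and (ii) come from Lemma~\ref{lem:exists-presburger} combined with soundness and completeness of $\mathsf{PrA}$, applied to the ground sentence $\Theta_\varphi(\bar n)$ and to $\neg\mathsf{EP}(\varphi)$ respectively, while (iii) is Theorem~\ref{thm:oracle} specialised to conclusions in $\exists\EL(\Nat)$. Your explicit one-step oracle derivation just unfolds the appendix proof of Theorem~\ref{thm:oracle} (it tacitly uses that every interpretation of the $P_p$ arises from some $\omega$-word), and it settles the intermediate-formula question that the paper handles by noting that restricting conclusions does not change derivability.
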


\begin{proof}
For (i), if the witness rule derives $\mathrm{Sat}(\varphi)$, then the Presburger proof
of $\Theta_\varphi(\bar n)$ makes $\mathsf{EP}(\varphi)$ true, and
Lemma~\ref{lem:exists-presburger} gives a satisfying word.  Conversely, if $\varphi$ is
satisfiable, Lemma~\ref{lem:exists-presburger} gives a tuple $\bar n$ satisfying
$\Theta_\varphi$; completeness of $\mathsf{PrA}$ proves the true ground Presburger
sentence $\Theta_\varphi(\bar n)$, so the witness rule applies.

For (ii), the refutation rule is sound because
$\mathsf{PrA}\vdash\neg\exists\bar y\,\Theta_\varphi$ implies that
$\mathsf{EP}(\varphi)$ is false, hence $\varphi$ is unsatisfiable by
Lemma~\ref{lem:exists-presburger}.  Conversely, if $\varphi$ is unsatisfiable, then
$\neg\mathsf{EP}(\varphi)$ is a true Presburger sentence, so it is provable in
$\mathsf{PrA}$ and the refutation rule applies.

For (iii), $\exists\EL(\Nat)$ is a syntactic subfragment of $\EL(\Nat)$.  Therefore the
soundness and completeness equivalence
$\HEL+\MONFO\vdash\psi$ iff $\models\psi$ from Theorem~\ref{thm:oracle} applies in
particular to every closed existential formula $\psi=\varphi$; restricting the set of
allowed conclusions does not change derivability of such conclusions.
\end{proof}

\section{Finite Active-Domain Model Checking}\label{sec:mc}

In clinical applications, model checking is performed over a finite active domain.
Padding a finite trace by a don't-care suffix would leave quantifiers ranging over an
infinite tail and is a different problem.

\begin{definition}[Active-domain semantics]\label{def:adom}
A \emph{finite trace} of length $N$ is $\sigma:D_N\to 2^{\mathcal{P}}$ for
$D_N=\{0,\ldots,N-1\}$. Active-domain satisfaction $\sigma,\eta,i\models_N\varphi$ is as
in Definition~\ref{def:sat}, except that all positions and quantifier ranges are
restricted to $D_N$: $\varphi_u$ requires $i+\eta(u)\in D_N$; modalities range over
$D_N\cap[i,i+\eta(t))$; quantifiers range over $D_N$.
\end{definition}

\begin{theorem}[Finite model-checking complexity]\label{thm:mc}
For finite trace $\sigma$ of length $N$ and closed $\varphi\in\sf{EL}(\Nat)$, deciding
$\sigma\models_N\varphi$ has \textbf{(i)} $\mathrm{PTIME}$ data complexity for fixed
$\varphi$, and \textbf{(ii)} $\mathrm{PSPACE}$-complete combined complexity in
$|\varphi|+\log N$, where $|\varphi|$ is the binary-encoded size from
Definition~\ref{def:terms} and $\log N$ is the bit-length of the active-domain bound.
\end{theorem}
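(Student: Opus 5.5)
We prove the three claims separately. For \textbf{(i)}, fix $\varphi$ and evaluate it by the obvious recursive algorithm: to decide $\sigma,\eta,i\models_N\psi$, recurse on the structure of $\psi$. A quantifier $Qx$ loops over the $N$ values in $D_N$. A modality $\mdiamond_t$ or $\boldBox_t$ loops over the at most $N$ positions in $D_N\cap[i,i+\eta(t))$. A displacement $\psi_u$ computes $i+\eta(u)$, checks membership in $D_N$, and recurses. Atoms are table lookups in $\sigma$.

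All term values that matter are below $N$ plus the fixed numerals of $\varphi$, so arithmetic is on $O(\log N)$-bit numbers. Terms whose value already exceeds $N$ can be capped without changing any outcome. The running time is therefore $O(N^{d}\cdot\mathrm{poly}(\log N))$, where $d$ is the nesting depth of quantifiers and modalities in $\varphi$. Since $d$ is fixed, this is polynomial in $N$. Equivalently, one can apply the forward embedding of Theorem~\ref{thm:forward} relativised to $D_N$ and use the standard PTIME data complexity of first-order evaluation over a finite structure. The direct recursion is cleaner, though, because $+$ is only partially defined on $D_N$.

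For the upper bound in \textbf{(ii)}, we use the same recursive algorithm and bound its space rather than its time. The recursion depth is at most $|\varphi|$. Each stack frame stores the current position, the loop counter for the active quantifier or modality, and the values of the bound variables in scope. Every one of these is a number capped at $N+2^{|\varphi|}$, so it needs $O(|\varphi|+\log N)$ bits. The total space is therefore $O(|\varphi|\cdot(|\varphi|+\log N))$, which is polynomial in $|\varphi|+\log N$. Binary numerals matter here only through this $2^{|\varphi|}$ cap, and that cap costs just $|\varphi|$ bits per number. The trace $\sigma$ is part of the input and is accessed read-only. The same argument in fact gives space polynomial in $|\varphi|$ plus $\log N$ even if the trace is presented succinctly.

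For PSPACE-hardness, we reduce from QBF, which is the step I expect to need the most care. Given $Q_1X_1\cdots Q_mX_m\,F$, take $N=2$ and $\mathcal P=\{p\}$, with the trace $\sigma(0)=\emptyset$ and $\sigma(1)=\{p\}$. Translate each quantifier $Q_iX_i$ to the first-order quantifier $Q_ix_i$ ranging over $D_2=\{0,1\}$. Replace each literal $X_i$ by $p_{x_i}$, which holds iff $x_i=1$, and replace $\neg X_i$ by $\neg p_{x_i}$.

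The displacement $p_{x_i}$ is evaluated at the origin $0$, so $0+\eta(x_i)\in D_2$ always holds and the active-domain side condition never fails. The translated formula therefore uses no modalities, respects the positivity convention vacuously, and has size linear in the QBF. It holds on $\sigma$ iff the QBF is true. Since $N$ is constant, hardness holds already in $|\varphi|$ alone, and this gives PSPACE-hardness of the combined complexity. The point to check carefully is the interaction between the active-domain rule for $\varphi_u$ and the encoding. Using exactly two positions and evaluating every literal at the origin avoids any out-of-domain failures.
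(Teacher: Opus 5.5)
Your proposal is correct and follows essentially the same route as the paper: a depth-first recursive evaluator, with recursion depth at most $|\varphi|$ and polynomially many bits per frame, gives the PTIME data-complexity and PSPACE upper bounds, and hardness comes from the same QBF reduction on the two-position trace $\sigma(0)=\emptyset$, $\sigma(1)=\{p\}$ with literals $X_i\mapsto p_{x_i}$. You are slightly more careful than the paper in two places, namely capping term values so that large binary numerals cost only $O(|\varphi|+\log N)$ bits (where the paper writes ``$O(\log N)$ bits per value'') and checking the active-domain side condition for displacement, but neither changes the argument.
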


\begin{proof}
\textbf{Upper bound.} Evaluate $\varphi$ by DFS recursion. A configuration is
$(\text{occurrence},i,\eta\!\upharpoonright\!\text{scope})$; each value uses
$O(\log N)$ bits, scope has $\leq|\varphi|$ variables, recursion depth is $\leq|\varphi|$.
Quantifiers and modalities iterate over $D_N$ (and over $D_N\cap[i,i+\eta(t))$);
arithmetic evaluations $\eta(u),\eta(t)$ on binary terms are done in
$\mathrm{poly}(|\varphi|+\log N)$ time and $O(\log N)$ space. The algorithm uses
polynomial space in $|\varphi|+\log N$. For fixed $\varphi$, nested iterations give
time $O(N^c)$, $c=c(\varphi)$.
\textbf{Hardness.} Reduce QBF on traces of length $2$: $\mathcal{P}=\{T\}$,
$\sigma(0)=\emptyset$, $\sigma(1)=\{T\}$; translate $X_i\mapsto T_{x_i}$,
$\exists X_i\mapsto\exists x_i$, $\forall X_i\mapsto\forall x_i$. Each $x_i$ ranges
over $\{0,1\}$ and $T_{x_i}$ is true iff $x_i=1$. The reduction is linear in any
encoding.
\end{proof}

\section{Discussion: Expressive Limits}

We conjecture that the $\EL$-definable languages form a proper subclass of the $\FO(\Nat,<,+;\mathcal{P})$-definable languages; a proof could proceed using Ehrenfeucht-Fraissé-games.
It would be interesting to explore suitable oracle boundaries for sub-hierarchies in $\HEL$. Furthermore, it is open whether bounded sub-fragments of $\EL$ might map exactly to known classes of  $\omega$-recognizable languages.

\section{Conclusion}\label{sec:conclusion}

This paper develops the foundational discrete theory of Ensemble Logic over $\Nat$.  Starting from the three core operations of exact displacement, bounded existence, and bounded universality, we gave a syntax and semantics for $\EL(\Nat)$ that is expressive enough to represent metric constraints while remaining close to the symbolic forms used in temporal, spatial, genomic, and clinical reasoning.  The formal development clarifies the precise role of additive terms, positive modal bounds, and first-order quantification over the discrete index domain.

The first main contribution is the translation of $\EL(\Nat)$ into monadic first-order Presburger arithmetic with unary predicates.  The capture-avoiding embedding provides a uniform semantic bridge from Ensemble Logic to a classical arithmetic setting and yields sharp analytical upper bounds for satisfiability and validity.  Complementing this upper bound, the two-counter-machine construction shows that these bounds are tight: satisfiability is $\Sigma^1_1$-complete and validity is $\Pi^1_1$-complete.  Thus the full logic is highly expressive, and its undecidability is not an artefact of the presentation but an intrinsic consequence of combining metric displacement, bounded modalities, and quantification.

The second contribution is the placement of $\EL(\Nat)$ in the descriptive hierarchy.  The logic contains the star-free $\omega$-languages and can define counting patterns such as $\{a^mb^mc^md^m\mid m\geq 1\}\cdot\Sigma^\omega$, showing that it reaches beyond $\omega$-regularity.  At the same time, the parity-separation argument shows that $\EL(\Nat)$ does not subsume all $\omega$-regular behavior.  This gives a more nuanced expressiveness profile: $\EL(\Nat)$ is neither simply a metric variant of linear temporal logic nor merely a notational fragment of regular-language theory.

The third contribution is proof-theoretic.  We introduced the Hilbert system $\HEL$, proved its soundness, and established an oracle-relative completeness theorem with respect to monadic Presburger validity.  The relative theorem identifies the exact arithmetic oracle needed for full completeness and explains why a weaker oracle for plain Presburger arithmetic cannot suffice.  For the existential fragment, the Presburger-based calculus and complexity analysis give a decidable and computationally meaningful subtheory, with $\mathrm{NP}$-complete satisfiability and $\mathrm{coNP}$-complete unsatisfiability.

The fourth contribution is algorithmic.  Finite active-domain model checking has $\mathrm{PTIME}$ data complexity for fixed formulas and $\mathrm{PSPACE}$-complete combined complexity.  This separates the cost of evaluating a fixed biomedical rule over a finite trace from the cost of allowing the rule itself to vary, and it provides the appropriate complexity baseline for implementations that target electronic health records, clinical-trial protocols, sleep-event streams, genomic coordinates, or other finite symbolic data sources.

Overall, the results establish $\EL(\Nat)$ as a mathematically robust discrete metric logic: expressive enough to encode non-regular counting and analytically complete computation-theoretic phenomena, yet structured enough to admit exact embeddings, proof systems for meaningful fragments, and finite-trace model-checking algorithms.  The remaining expressive-limit questions discussed above point toward a sharper classification of pure $\EL$ inside monadic Presburger arithmetic and toward practically useful bounded fragments for biomedical knowledge representation.

\paragraph*{Acknowledgment}
Part of this research was done during a research stay of MD in April
2026 at the University of Texas Health Science Center at Houston.
He would like to thank his colleagues for a fruitful and most enjoyable
time. GQZ's effort has been supported in part by the U.S. National Science Foundation Award IIS2500624 and the National Institutes of Health grants R01AG084236 and U24AG098157. The views of the paper are those of the authors and do not reflect those of the funding agencies.


\section{Appendix}

\subsection{Proof of Theorem~\ref{thm:oracle}}

\textbf{Theorem~\ref{thm:oracle}} [Oracle-relative completeness]:
Extend $\HEL$ with the rule: from $T_0(\varphi)\in\MONFO$ infer $\varphi$ (for closed
$\varphi$). Then $\models\varphi\Leftrightarrow\HEL+\MONFO\vdash\varphi$.

\textbf{Soundness of the extended system $\HEL+\MONFO$.} 
We must show
that for every closed $\varphi\in\EL(\Nat)$, if $\HEL+\MONFO\vdash\varphi$
then $\models\varphi$. The system $\HEL+\MONFO$ inherits all axioms and
rules of $\HEL$ and adds a single rule: from $T_0(\varphi)\in\MONFO$, infer
$\varphi$. We verify soundness for each kind of derivation step.

\emph{Inherited $\HEL$-axioms and rules.} By
Theorem~\ref{thm:soundness}, every $\HEL$-axiom is valid on every
$\omega$-word, and modus ponens, generalisation, and necessitation preserve
validity. The verification there proceeds axiom-by-axiom: $\mathbf{A1}$ by
the classical truth-table argument, $\mathbf{A2}$-$\mathbf{A3}$ by standard
first-order semantics, $\mathbf{A4}$ by definition of $\Th(\Nat,<,+)$,
$\mathbf{B1}$-$\mathbf{B4}$ by the displacement clause of
Definition~\ref{def:sat}, $\mathbf{C1}$-$\mathbf{C8}$ by case analysis on
the satisfaction intervals (with $\mathbf{C7}$, $\mathbf{C8}$
explicitly checked in Theorem~\ref{thm:soundness}), $\mathbf{D1}$ from
$\Nat=\{0\}\cup\{y+1\mid y \geq 0\}$,  and $\mathbf{D2}$ from
mathematical induction.

\emph{The oracle rule.} Suppose the last step of the derivation is
$T_0(\varphi)\in\MONFO\Rightarrow\varphi$. By Theorem~\ref{thm:forward},
for every $\omega$-word $\alpha$ and every assignment $\eta$,
$$
(\alpha,\eta,0)\models\varphi\;\;\Longleftrightarrow\;\;
(\Nat,<,+,(P_p^\alpha)_{p\in\mathcal{P}}),\eta\models T_0(\varphi).
$$
Since $\varphi$ is closed, $\eta$ is immaterial. The condition
$T_0(\varphi)\in\MONFO$ asserts that $T_0(\varphi)$ holds in
$(\Nat,<,+)$ under \emph{every} interpretation $(P_p)$ of the unary
predicates. In particular, for every $\omega$-word $\alpha$, the
assignment $P_p\mapsto P_p^\alpha$ is one such interpretation, so
$(\Nat,<,+,(P_p^\alpha))\models T_0(\varphi)$, hence
$(\alpha,0)\models\varphi$. As $\alpha$ was arbitrary, $\models\varphi$.

\textbf{Completeness of $\HEL+\MONFO$.} Assume $\models\varphi$. By
the equivalence in Theorem~\ref{thm:forward} applied in the
\emph{contrapositive} direction, for every $\omega$-word $\alpha$,
$(\Nat,<,+,(P_p^\alpha))\models T_0(\varphi)$. Crucially, the map
$\alpha\mapsto (P_p^\alpha)_{p\in\mathcal{P}}$ is a bijection between
$\omega$-words and unary-predicate interpretations
$\mathcal{P}\to 2^{\Nat}$: setting
$P_p^\alpha=\{i\in\Nat:p\in\alpha(i)\}$ defines a bijection with inverse
$\alpha(i)=\{p\in\mathcal{P}:i\in P_p\}$. Hence every interpretation
$(P_p)$ arises as $(P_p^\alpha)$ for some $\alpha$, and
$$
T_0(\varphi)\in\MONFO\;\;\Longleftrightarrow\;\;
\forall\alpha\;(\alpha,0)\models\varphi\;\;\Longleftrightarrow\;\;\models\varphi.
$$
The forward direction of this chain gives
$T_0(\varphi)\in\MONFO$, and the oracle rule then derives $\varphi$ in a
single step. Thus $\HEL+\MONFO\vdash\varphi$.

\textbf{Axiom-usage.} The completeness derivation produced
above uses \emph{only} the oracle rule and \emph{none} of the structural
axioms or rules of $\HEL$. This is  the standard
content of an \emph{oracle-relative completeness} theorem in modal
logic: the underlying calculus is sound (and the oracle rule extends
this soundly), but the completeness reduction itself is delegated to
the oracle. The role of the $\HEL$-axioms is therefore:
\begin{itemize}

\item \textbf{Sub-derivations leading to the oracle.} While the
      completeness construction above invokes the oracle in one step,
      practical proof search would typically use the $\HEL$-axioms to
      \emph{normalise} $\varphi$ before consulting the oracle (e.g.,
      $\mathbf{C8}$ unfolds bounded modalities, $\mathbf{B1}$-$\mathbf{B2}$
      flatten nested displacements, $\mathbf{D1}$ unfolds existentials),
      reducing the size of the formula handed to the oracle.
\item \textbf{Oracle-free completeness on restricted fragments.}
      The structural axioms suffice without an oracle on
       the \emph{existential fragment}
            (Theorem~\ref{thm:hel-exists}), where the existential
            counterparts $\mathbf{A1^\exists}$-$\mathbf{E4^\exists}$
            give a meta-complete system without an oracle, although
            with $\mathbf{A4^\exists}$ retained as a Presburger
            side-condition oracle.
      
\end{itemize}

\emph{Necessity of the oracle.} By
Corollary~\ref{cor:complete}, validity in $\EL(\Nat)$ is
$\Pi^1_1$-complete, hence not recursively enumerable. No recursive
extension of $\HEL$ can therefore be complete; the relativisation to
$\MONFO$ (also $\Pi^1_1$-complete, by
Halpern~\cite{halpern1991monadic}) is optimal in this analytical
sense.

\end{document}